\documentclass{llncs}
\usepackage{makeidx}
\usepackage{amsmath}
\usepackage[dvipsnames]{xcolor}
\usepackage{gastex}

\DeclareMathOperator{\Div}{Div}
\DeclareMathOperator{\Sep}{Sep}

\newenvironment{proofof}[1]{\noindent\textit{Proof
    \protect{#1}.}}

\begin{document}

\title{Regular Interval Exchange Transformations over a Quadratic Field}

\titlerunning{Regular IETs over a Quadratic Field}
\author{Francesco Dolce\inst{1}}
\institute{Universit\'e Paris Est}

\maketitle

\begin{abstract}
We describe a generalization of a result of Boshernitzan and Carroll: an extension of Lagrange's Theorem on continued fraction expansion of quadratic irrationals to interval exchange transformations. In order to do this, we use a two-sided version of the Rauzy induction.
In particular, we show that starting from an interval exchange transformation whose lengths are defined over a quadratic field and applying the two-sided Rauzy induction, one can obtain only a finite number of new transformations up to homothety.

\keywords{Symbolic Dynamics, Interval Exchange Transformations, Rauzy Induction, Continued Fractions.}
\end{abstract}

\section{Introduction}
\label{sec:intro}
It is a truth universally acknowledged that the simple continued fraction expansion of a quadratic irrational must be eventually periodic (result known as Lagrange's Theorem).

Continued franctions are relatad to differents combinatorial tools, such as Stern-Brocot trees, mechanical words, rotations, etc. (see~\cite{concrete} and \cite{Lothaire2}).
An interesting representation of the continued fraction development is given by inducing the first return map of a $2$-interval exchange transformation, the ratio of whose lengths is a quadratic irrational, on the larger exchanged semi-interval.

Interval exchange transformations were introduced by Oseledec~\cite{Oseledec1966} following an earlier idea of Arnol'd~\cite{Arnold1963}. These transformations form a generalization of rotations of the circle (the two notions coincide when there are exactly $2$ intervals).
Rauzy introduced in~\cite{Rauzy1979} a transformation, now called Rauzy induction (or Rauzy-Veech induction), which operates on interval exchange transformations.
It actually transforms an interval exchange transformation into another, operating on a smaller semi-interval. Its iteration can be viewed, as mentioned, as a
generalization of the continued fraction development (since we work with $n$-interval exchange transformations with $n \geq 2$).
The induction consists in taking the first return map of the transformation with respect to a particular subsemi-interval of the original semi-interval.
A two-sided version of Rauzy induction is studied in~\cite{Twosided}, along with a characterization of the intervals reachable by the iteration of this two-sided induction, the so called admissible intervals.

Interval exchange transformations defined over quadratic fields have been studied by Boshernitzan and Carroll (\cite{Boshernitzan} and \cite{BoshernitzanCarroll1997}).
Under this hypothesis, they showed that, using iteratively the first return map on one of the semi-intervals exchanged by the transformation, one obtains only a finite number of different new transformations up to rescaling, extending the classical Lagrange's theorem that quadratic irrationals have a periodic continued fraction expansion.

In this paper we generalize this result, enlarging the family of transformations obtained using induction on every admissible semi-interval.
This contains the results of \cite{BoshernitzanCarroll1997} because every semi-interval exchanged by a transformation is admissible, while for $n>2$ there are admissible semi-intervals that we can not obtain using the induction only on the exchanged ones.

The paper is organised as follows.

In Section~\ref{sec:iet}, we recall some notions concerning interval exchange transformations, minimality and regularity. We also introduce an equivalence relation on the set of interval exchange transformations.
We finally recall the result of Keane~\cite{Keane1975} which proves that regularity is a sufficient condition for minimality of such a transformation (Theorem~\ref{theo:Keane}).

In Section~\ref{sec:rauzy} we recall the Rauzy induction and the generalization to its two-sided version. We also recall the definition of admissibility and how this notion is related to Rauzy induction (Theorems~\ref{theo:biRauzy1} and~\ref{theo:biRauzy2}). We conclude the section introducing the equivalence graph of a regular interval exchange transformation.

The final part of this paper, Section~\ref{sec:quadratic}, is devoted to the proof of our main result (Theorem~\ref{theo:quadratic}), i.e. the finiteness of the number of equivalence classes for a regular interval exchange transformation defined over a quadratic field.

\paragraph{Acknowledgements}
I would like to thank Dominique Perrin and Valerie Berth\'e for their suggestions and valuables remarks: \emph{merci !} Thanks also to Sonja M. Hiltunen for her corrections: \emph{kiitos!}
This work was supported by grants from R\'egion \^Ile-de-France.

\section{Interval Exchange Transformations}
\label{sec:iet}

Let us recall the definition of an interval exchange transformation (see~\cite{CornfeldFominSinai1982} or \cite{Twosided} for a more detailed presentation).

A \emph{semi-interval} is a nonempty subset of the real line of the form $[\ell,r[ = \{ z \in \bbbr \mid \ell \leq z < r \}$. Thus it
is a left-closed and right-open interval.

Let $A = \{ a_1, a_2, \ldots, a_s \}$ be a finite ordered alphabet with $a_1 < a_2 < \cdots < a_s$ and $(I_a)_{a \in A}$ an ordered partition of of $[\ell,r[$ in semi-intervals.
Set $\lambda_i$ the length of $I_{a_i}$.
Let $\pi \in \mathcal{S}_s$ be a permutation on $A$.

Define $\gamma_i=\sum_{a_j < a_i}\lambda_j$ and $\delta_{\pi(i)}=\sum_{\pi(a_j) < \pi(a_i)}\lambda_j$.
Set $\alpha_a=\delta_a - \gamma_a$.
The \emph{interval exchange transformation} relative to  $(I_a)_{a\in A}$ is the map $T:[\ell,r[\rightarrow [\ell,r[$ defined by

\begin{displaymath}
T(z)=z+\alpha_a\quad \text{ if } z\in I_a.
\end{displaymath}

Observe that the restriction of $T$ to $I_a$ is a translation onto $J_a=T(I_a)$, that $\gamma_i$ is the left boundary of $I_{a_i}$ and that $\delta_j$ is the left boundary of $J_{a_i}$.

Note that the family $(J_a)_{a\in A}$ is also a partition of $[\ell,r[$.
In particular, the transformation $T$ defines a bijection from $[\ell,r[$ onto itself.

An interval exchange transformation relative to $(I_a)_{a\in A}$ is also called a $s$-interval exchange transformation.
The values $(\alpha_a)_{a\in A}$ are called the \emph{translation values} of the transformation $T$.
We will also denote $T = T_{\pi, \lambda}$, where $\lambda = (\lambda_i)_{a_i \in A}$ is the ordered sequence of lengths of the semi-intervals.

\begin{example}
\label{ex:rotation}
Let $T = T_{\pi, \lambda}$ be the interval exchange transformation corresponding to $A=\{a,b\}$,
$a < b$, $\pi = (12)$, i.e. such that $\pi(b) < \pi(a)$
and $\lambda = (1-\alpha, \alpha)$ with $\alpha = \frac{3-\sqrt{5}}{2}$.
Thus the two semi-intervals exchanged by $T$ are
$I_a=[0,1-\alpha[$ and $I_b=[1-\alpha,1[$.
The transformation $T$, representend in Figure~\ref{fig:rotation}, is the rotation of angle $\alpha$ on the semi-interval $[0,1[$ defined by
$T(z)=z+\alpha\bmod 1$.

\begin{figure}[hbt]
\centering
\gasset{Nadjust=wh,AHnb=0}
\begin{picture}(100,10)(0,0)

\node[fillcolor=red](0H)(0,10){}
\node[Nframe=n](0)(0,13){$0$}
\node[fillcolor=blue](1-alphaH)(61.8,10){}
\node[Nframe=n](1-alpha)(61.8,13){$1-\alpha$}
\node(1H)(100,10){}
\node[Nframe=n](1)(100,13){$1$}

\node[fillcolor=blue](0B)(0,0){}
\node[fillcolor=red](alphaB)(38.2,0){}
\node[Nframe=n](alpha)(38.2,3){$\alpha$}
\node(1B)(100,0){}

\drawedge[linecolor=red](0H,1-alphaH){}
\drawedge[linecolor=blue](1-alphaH,1H){}
\drawedge[linecolor=blue](0B,alphaB){}
\drawedge[linecolor=red](alphaB,1B){}

\end{picture}
\caption{Rotation of angle $\alpha$ on the semi-interval $[0,1[$.}
\label{fig:rotation}
\end{figure}
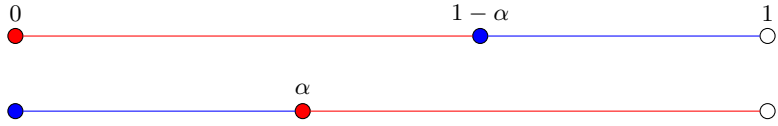
\end{example}

Note that the transformation $T_{\pi, \lambda,}$ does not depend on the relative position, (the choice of the left point $\ell$).

It is easy to verify that the family of $s$-interval exchange transformations is closed by taking inverses.

\subsection{Equivalent Interval Exchange Transformations}
\label{subsec:equivalent}

Two $s$-interval exchange transformation $T = T_{\pi, \lambda}$ and $S = T_{\sigma, \mu}$ are said to be \emph{equivalent} either if $\sigma = \pi$ and $\mu = c\lambda$ for some $c > 0$ or if $\sigma = \tau \circ \pi$ and $\mu = c \widetilde{\lambda}$, where $\tau : i \mapsto (s-i+1)$ is the permutation that reverses the names of the semi-intervals and $\widetilde{\lambda} = (\lambda_s, \lambda_{s-1}, \ldots, \lambda_1)$.

We denote by $[T_{\pi, \lambda}]$ the equivalence class of $T_{\pi, \lambda}$.

\begin{example}
\label{ex:equivalentiet}
Let $T_{\pi, \mu}$ be the interval exchange transformation defined by $\pi = (12)$ and $\mu = (1-2\alpha, \alpha)$, with $\alpha = \frac{3-\sqrt{5}}{2}$ (see Figure~\ref{fig:psirotation}). The transformation $T_{\pi, \mu}$ is equivalent to the transformation $T_{\pi,\lambda}$ of Example~\ref{ex:rotation}.
Indeed $\alpha^2 = 3\alpha-1$ and one can easily show that $\mu = (1-\alpha)\widetilde{\lambda}$.

\begin{figure}[hbt]
\centering
\gasset{Nadjust=wh,AHnb=0}
\begin{picture}(61.8,10)(0,0)

\node[fillcolor=red](0H)(0,10){}
\node[Nframe=n](0)(0,13){$0$}
\node[fillcolor=blue](1-2alphaH)(23.6,10){}
\node[Nframe=n](1-2alpha)(23.6,13){$1-2\alpha$}
\node(1-alphaH)(61.8,10){}
\node[Nframe=n](1-alpha)(61.8,13){$1-\alpha$}

\node[fillcolor=blue](0B)(0,0){}
\node[fillcolor=red](alphaB)(38.2,0){}
\node[Nframe=n](1-alpha)(38.2,3){$\alpha$}
\node(1-alphaB)(61.8,0){}

\drawedge[linecolor=red](0H,1-2alphaH){}
\drawedge[linecolor=blue](1-2alphaH,1-alphaH){}
\drawedge[linecolor=blue](0B,alphaB){}
\drawedge[linecolor=red](alphaB,1-alphaB){}

\end{picture}
\caption{Transformation $T_{(12),(1-2\alpha, \alpha)}$.}
\label{fig:psirotation}
\end{figure}
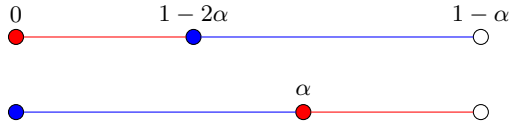
\end{example}

\subsection{Regular Interval Exchange Transformations}
\label{subsec:regular}

The \emph{orbit} of a point $z \in [\ell,r[$ is the set $\mathcal{O}(z) = \{T^n(z)\mid n\in\bbbz \}$. 
The transformation $T$ is said to be \emph{minimal} if for any $z\in[\ell, r[$, $\mathcal{O}(z)$ is dense in $[\ell,r[$.

The points $0=\gamma_1,\gamma_2,\ldots,\gamma_s$ form the set of \emph{separation points} of $T$, denoted $\Sep(T)$.
Note that the transformation $T$ has at most $s-1$ \emph{singularities} (points at which it is not continuous), which are
among the nonzero separation points $\gamma_2,\ldots,\gamma_s$.

An interval exchange transformation $T_{\pi,\lambda}$ is called \emph{regular} if the orbits of the nonzero separation points $\gamma_2,\ldots,\gamma_s$ are infinite and disjoint.
Note that the orbit of $0$ cannot be disjoint from the others since one has $T(\gamma_i)=0$ for some $i$ with $2\le i\le s$.

A regular interval exchange transformation is also said to satisfy the \emph{idoc} (infinite disjoint orbit condition). It is also said to have the Keane property or to be without \emph{connection} (see~\cite{BoissyLanneau2009}).

Note that  since $\delta_{\pi(2)}=T(\gamma_2),\ldots,\delta_{\pi(s)}=T(\gamma_s)$, $T$ is regular if and only if the orbits of $\delta_{\pi(2)},\ldots,\delta_{\pi(s)}$
are infinite and disjoint.

As an example, the $2$-interval exchange transformation of Example~\ref{ex:rotation} is regular, as every rotation of irrational angle.
The following result is due to Keane~\cite{Keane1975}.

\begin{theorem}[Keane]
\label{theo:Keane}
A regular interval exchange transformation is minimal.
\end{theorem}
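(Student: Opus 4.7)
The plan is a proof by contradiction. Suppose $T = T_{\pi,\lambda}$ is regular but not minimal, so that some orbit $\mathcal{O}(z_0)$ fails to be dense in $[\ell, r[$. First I would construct a proper nonempty closed subset $F \subseteq [\ell,r[$ that is invariant under both $T$ and $T^{-1}$, by enlarging $\overline{\mathcal{O}(z_0)}$ with the (finitely many) orbits of the separation points, thereby absorbing the mild discontinuities of $T$ at $\gamma_2,\dots,\gamma_s$. The complement $U = [\ell, r[\setminus F$ is then a nonempty open invariant set, and I would decompose it into its countably many connected components, each of which is a maximal sub-semi-interval of the form $[c, d[$.

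Next I would analyse the dynamics of $T$ on one such maximal component $J = [c, d[$. Each iterate $T^n$ is, when restricted to $J$, piecewise a translation, with breaks precisely at the iterated preimages of the separation points $\gamma_2,\dots,\gamma_s$. Hence each image $T^n(J)$ is a finite union of semi-intervals of total length equal to that of $J$, entirely contained in the invariant set $U$. Since $U$ has total length at most $r-\ell$, the family $\{T^n(J)\}_{n \ge 0}$ cannot consist of pairwise disjoint sets, so some positive iterate $T^k$ sends a point of $J$ back into $J$.

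The decisive step, which I also expect to be the main technical obstacle, is to convert this recurrence into a \emph{connection} between separation points. Tracking the two endpoints $c$ and $d$ of $J$ through the dynamics, one verifies that the break-points produced inside $T^n(J)$ by the discontinuities of $T$ can only align with the endpoints of components of $U$ through identifications of the form $T^k(\gamma_i) = \gamma_j$ for some $k \in \bbbz$ and some $i,j \in \{2,\dots,s\}$, possibly with $i=j$, in which case the orbit of $\gamma_i$ is forced to be finite. Either alternative directly contradicts the regularity hypothesis that the orbits of $\gamma_2, \dots, \gamma_s$ are infinite and pairwise disjoint. The delicate point is bookkeeping the discontinuities so that the colliding endpoints are genuinely nonzero separation points, rather than the boundary points $\ell$ or $r$ of the ambient semi-interval, so that the final contradiction is with regularity itself and not with a trivial boundary coincidence.
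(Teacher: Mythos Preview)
The paper does not provide a proof of this theorem; it merely states the result and attributes it to Keane~\cite{Keane1975}. There is therefore no ``paper's own proof'' against which to compare your attempt.

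That said, your sketch is in the spirit of Keane's original argument: assume non-minimality, extract a proper closed invariant set, pass to the open complement, and use a recurrence argument on a connected component to manufacture a connection $T^k(\gamma_i)=\gamma_j$ contradicting the infinite disjoint orbit condition. A few points would need tightening before this becomes a proof. First, the connected components of an open subset of $[\ell,r[$ need not all be half-open on the left; you should work with open intervals in the interior and handle the boundary separately. Second, your enlargement of $\overline{\mathcal{O}(z_0)}$ by all separation-point orbits may well be all of $[\ell,r[$ if those orbits are dense, so you must instead argue directly that $\overline{\mathcal{O}(z_0)}$ is already $T$-invariant except possibly at the separation points, and use that its complement is open. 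Third, and most importantly, the passage from ``some $T^k$ returns a point of $J$ to $J$'' to ``some $T^k(\gamma_i)=\gamma_j$'' is where the real content lies: one typically shows that the left endpoint of a maximal component of the complement must lie in the forward orbit of some $\gamma_i$, and that recurrence of the component forces two such points to coincide. Your outline gestures at this but does not carry it out; as written it is a plausible plan rather than a proof.
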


The converse is not true. Indeed, consider the rotation of angle $\alpha$ with $\alpha$ irrational, as a $3$-interval exchange transformation
with $\lambda=(1-2\alpha,\alpha,\alpha)$ and $\pi=(123)$.
The transformation is minimal, as is any rotation of an irrational angle, but it is not regular since $\gamma_2=1-2\alpha$, $\gamma_3=1-\alpha$ and
thus $\gamma_3=T(\gamma_2)$.

\section{Rauzy Induction}
\label{sec:rauzy}

We recall in this section the transformation called Rauzy induction, defined in~\cite{Rauzy1979}, which operates on regular interval
transformations, and some results concerning this transformation (Theorems~\ref{theo:Rauzy1} and \ref{theo:Rauzy2}).
We also recall a two-sided version of this transformation studied in~\cite{Twosided} and some of the results relative to it (Theorems~\ref{theo:biRauzy1} and~\ref{theo:biRauzy2}).

\subsection{Right Rauzy Induction}
\label{subsec:right}

Let $T=T_{\pi, \lambda}$ be an interval exchange transformation relative to  $(I_a)_{a\in A}$. 

For  $\ell<t<r$, the semi-interval $[\ell,t[$  is \emph{right admissible}  for $T$ if there is a $k\in\bbbz$ such that $t=T^k(\gamma_a)$ for some $a\in A$
and
\begin{enumerate}
\item[(i)] if $k>0$, then $t<T^h(\gamma_a)$ for all $h$ such that $0<h<k$,
\item[(ii)] if $k\le 0$, then $t<T^h(\gamma_a)$ for all $h$ such that $k<h\le 0$.
\end{enumerate}
We also say that $t$ itself is right admissible.
Note that all semi-intervals $[\ell,\gamma_a[$ with $\ell<\gamma_a$ are right admissible.
Similarly all semi-intervals $[\ell, \delta_a[$ with $\ell<\delta_a$ are right admissible.

\begin{example}
\label{ex:division}
Let $T$ be the interval exchange transformation of Example~\ref{ex:rotation}.
The semi-interval $[0,t[$ for $t=1-\alpha$ or $t=1-2\alpha$ is right admissible since $1-\alpha=\gamma_2$ and $1-2\alpha=T^{-1}(\gamma_2) < \gamma_2$.
On the contrary, for $t=2-3\alpha$, it is not right admissible because $t=T^{-2}(\gamma_2)$ but $\gamma_2<t$ contradicting (ii).
\end{example}

Assume now that $T$ is minimal.
Let $I\subset [\ell,r[$ be a  semi-interval.
Since $T$ is minimal, for each $z\in [\ell,r[$ there exists an integer $n>0$ such that $T^n(z)\in I$.

The \emph{transformation induced} by $T$ on $I$ is the transformation $S:I\rightarrow I$ defined for $z\in I$ by
$S(z)=T^n(z)$ with $n=\min\{n>0 \mid T^n(z)\in I\}$.
The semi-interval $I$ is called the \emph{domain} of $S$, denoted $D(S)$.

\begin{example}
Let $T$ be the transformation of Example~\ref{ex:rotation}.
Let $I=[0,1-\alpha[$. 
The transformation induced by $T$ on $I$ is
\begin{displaymath}
S(z)=
\begin{cases}
T(z)	&	\text{ if $0\le z< 1-2\alpha$} \\
T^2(z)	&	\text{ otherwise}.
\end{cases}
\end{displaymath}
\end{example}

The following result is Theorem 14 in~\cite{Rauzy1979}.

\begin{theorem}[Rauzy]
\label{theo:Rauzy1}
Let $T$ be a regular $s$-interval exchange transformation and let $I$ be a right admissible interval for $T$.
The transformation induced by $T$ on $I$ is a regular $s$-interval exchange transformation.
\end{theorem}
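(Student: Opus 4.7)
The plan is to exhibit the induced map $S$ explicitly as an $s$-IET and then to deduce its regularity from that of $T$. Since $T$ is regular, Keane's theorem (Theorem~\ref{theo:Keane}) ensures that $T$ is minimal, so the first return time $n(z)=\min\{n>0\mid T^n(z)\in I\}$ is finite for every $z\in I$; consequently $S=T^{n(\cdot)}$ is well-defined on all of $I$, and on each maximal sub-interval on which $n$ is constant $S$ acts as a translation. It remains to count these sub-intervals and to transfer the idoc condition from $T$ to $S$.

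First I would analyse the discontinuities of $S$. A point $z\in I$ is a discontinuity of $S$ precisely when some forward iterate $T^{j}(z)$ with $0\leq j<n(z)$ is a discontinuity of $T$, or when $T^{n(z)}(z)=t$, so that the orbit lands on the right boundary of $I$. Pulling these conditions back through the relevant iterates of $T$, every discontinuity of $S$ inside $I$ has the form $T^{-m}(\gamma_a)$ for some $m\geq 0$ and some $a\in A$, or is a preimage $T^{-m}(t)$.

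Second, the counting. The goal is to show that exactly $s-1$ such discontinuities lie strictly inside $I$, so that $S$ is a genuine $s$-IET. This is where right admissibility is essential. Writing $t=T^{k}(\gamma_a)$ as in the definition, the two cases $k>0$ and $k\leq 0$ together with the monotonicity conditions (i)--(ii) guarantee that the orbit cut that would otherwise be introduced by pulling $t$ back into $I$ coincides with a cut already coming from a separation point $\gamma_b$ with $b\neq a$; no extra cut is created. A convenient way to organise this is via the Rokhlin tower of the first return to $I$: admissibility says exactly that this tower has $s$ columns, and their bases partition $I$ into the $s$ semi-intervals exchanged by $S$. Without admissibility one would generically obtain $s+1$ columns, so the whole subtlety is packed into this collapse.

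Finally, regularity of $S$ is inherited from that of $T$. As observed in the first step, each separation point of $S$ is a $T$-iterate of some $\gamma_b$, and its $S$-orbit is a subsequence of the $T$-orbit of that $\gamma_b$. Infiniteness of $S$-orbits of the separation points of $S$ therefore follows from infiniteness of the $T$-orbits of the $\gamma_b$, and disjointness from disjointness, using the fact that distinct cuts of $S$ arise from separation points of $T$ lying in distinct $T$-orbits. The main obstacle is the second step: the case analysis establishing which orbit cut collapses, so that the count is exactly $s$ rather than $s+1$. Everything else is either a direct invocation of minimality via Keane's theorem, or a routine transport of the idoc from $T$ to $S$.
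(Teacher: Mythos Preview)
The paper does not give its own proof of this statement: Theorem~\ref{theo:Rauzy1} is simply quoted as Theorem~14 of Rauzy~\cite{Rauzy1979}, with no argument reproduced. There is therefore nothing in the paper to compare your proposal against.

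As a standalone outline your sketch follows the classical route (Rokhlin tower of the first return, counting columns, transporting the idoc). One small point worth tightening: the paper notes just after the statement that for an \emph{arbitrary} sub-semi-interval the induced map is an IET on at most $s+2$ intervals; your count of ``generically $s+1$'' is already tacitly using that the left endpoint of a right-admissible interval equals $\ell=\gamma_1$, which kills one of the two possible boundary cuts. More substantively, the step you flag as ``the main obstacle'' --- showing that the cut coming from pulling the right endpoint $t$ back into $I$ coincides with a cut already produced by some $\gamma_b$ --- is asserted rather than carried out, so what you have is a plan rather than a proof. Since the paper itself defers entirely to~\cite{Rauzy1979}, this is in the same spirit; but if you intend it as an actual proof you would need to execute that case analysis on $k>0$ versus $k\le 0$ explicitly.
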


Note that the transformation induced by an $s$-interval exchange transformation on $[\ell,r[$ on any
semi-interval included in $[\ell,r[$ is always an interval exchange transformation on at most $s+2$ intervals (see~\cite{CornfeldFominSinai1982}, Chapter 5 p. 128).

\begin{example}
\label{ex:psirotation}
Consider again the transformation of Example~\ref{ex:rotation}.
The transformation induced by $T$ on the semi-interval $I=[0,1-\alpha[$ is the $2$-interval exchange transformation represented in Figure~\ref{fig:psirotation}.
\end{example}

Let $T=T_{\pi,\lambda}$ be a regular s-interval exchange transformation on $[\ell,r[$.
Set
\begin{displaymath}
Z(T)=[\ell,\max\{\gamma_{s},\delta_{\pi(s)}\}[.
\end{displaymath}
Note that $Z(T)$ is the largest semi-interval which is right-admissible  for $T$.
We denote by $\psi(T)$ the transformation induced by $T$ on $Z(T)$.

The following result is Theorem 23 in~\cite{Rauzy1979}.

\begin{theorem}[Rauzy]
\label{theo:Rauzy2}
Let $T$ be a regular interval exchange transformation.
A semi-interval $I$ is right admissible for $T$ if and only if there exists an integer $n\ge 0$ such that $I=Z(\psi^n(T))$. In this case, the transformation induced by $T$ on $I$ is $\psi^{n+1}(T)$.
\end{theorem}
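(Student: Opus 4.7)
The plan is to establish the two implications separately, starting with the easier direction.

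\textbf{Sufficiency.} I would proceed by induction on $n$, showing simultaneously that $Z(\psi^n(T))$ is right admissible for $T$ and that the transformation induced by $T$ on it equals $\psi^{n+1}(T)$. For the base $n = 0$, the right endpoint $\max\{\gamma_s, \delta_{\pi(s)}\}$ of $Z(T)$ is either $\gamma_s$ (admissible via $k = 0$ with vacuous side conditions) or $\delta_{\pi(s)} = T(\gamma_s)$ (admissible via $k = 1$, also vacuous). The inductive step uses two ingredients: (a) transitivity of induced transformations, so that the map induced by $T$ on $Z(\psi^n(T))$ coincides with the map induced by $\psi^n(T)$ on that same semi-interval, namely $\psi^{n+1}(T)$; and (b) right admissibility pulls back through induction, i.e., if $J$ is right admissible for $T$ and $I \subseteq J$ is right admissible for the $T$-induced transformation on $J$, then $I$ is right admissible for $T$. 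This last claim reduces to the fact that the orbit of a point under an induced map is a subsequence of its $T$-orbit, so the relation $t = S^{k'}(\gamma_{a'}^S)$ unfolds to $t = T^k(\gamma_a)$ with side conditions carrying over.

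\textbf{Necessity.} I would split this into three steps. \emph{Maximality}: if $t = T^k(\gamma_a)$ is right admissible, then $t \leq \max\{\gamma_s, \delta_{\pi(s)}\}$, shown by case analysis on the sign of $k$. For $k = 0$ one has $t = \gamma_a \leq \gamma_s$; for $k = 1$ one has $t = \delta_{\pi(a)} \leq \delta_{\pi(s)}$; for $k \geq 2$ condition (i) at $h = 1$ forces $t < \delta_{\pi(a)} \leq \delta_{\pi(s)}$; for $k \leq -1$ condition (ii) at $h = 0$ forces $t < \gamma_a \leq \gamma_s$. So $I \subseteq Z(T)$. \emph{Descent}: if $I \subsetneq Z(T)$ is right admissible for $T$, then $I$ is right admissible for $\psi(T)$. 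The key observation is that the $\psi(T)$-orbit of any $z \in Z(T)$ equals its $T$-orbit intersected with $Z(T)$, which lets one re-encode $t = T^k(\gamma_a)$ as $t = \psi(T)^{k'}(\gamma_{a'}^{\psi(T)})$ with analogous side conditions; care is needed when $\gamma_a$ lies outside $Z(T)$, in which case it has to be replaced by the appropriate separation point of $\psi(T)$ obtained as a $T$-preimage. \emph{Termination}: iterating descent yields $I \subseteq Z(\psi^n(T))$ for all $n$, and one must force equality at some stage.

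The main obstacle lies in termination. The nested sequence $Z(T) \supsetneq Z(\psi(T)) \supsetneq \cdots$ is strictly decreasing because each $\psi^n(T)$ is a regular $s$-IET (by Theorem~\ref{theo:Rauzy1}), so its last exchanged interval is nondegenerate and $Z(\psi^n(T))$ is strictly smaller than its domain. One must argue further that the lengths shrink to zero; this exploits the idoc property, since otherwise the intersection $\bigcap_n Z(\psi^n(T))$ would contain an interval of positive length on which no separation-point orbit could enter, contradicting regularity of $T$. Since $|I|$ is a fixed positive number, the strict inclusion $I \subsetneq Z(\psi^n(T))$ cannot persist forever, so equality $I = Z(\psi^n(T))$ must occur at some $n$. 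The identification of the induced transformation as $\psi^{n+1}(T)$ then follows from the sufficiency direction already established.
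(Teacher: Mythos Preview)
The paper does not supply its own proof of this statement: it is quoted verbatim as Theorem~23 of Rauzy~\cite{Rauzy1979} and left without argument, so there is nothing in the paper to compare your proposal against directly.

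That said, your outline follows the standard strategy (sufficiency by induction on $n$; necessity via maximality, descent, and termination), and the sufficiency direction together with the maximality and descent steps are essentially correct. The genuine gap is in termination. You claim that if $|Z(\psi^n(T))|$ does not tend to zero then the intersection $\bigcap_n Z(\psi^n(T))=[\ell,t^*[$ would be an interval ``on which no separation-point orbit could enter''; this is false. For every $n$ all $s$ separation points of $\psi^n(T)$ lie in $[\ell,t_{n-1}[\supset[\ell,t^*[$, and each of them is a point of the $T$-orbit of some original $\gamma_a$, so separation-point orbits enter $[\ell,t^*[$ abundantly. Minimality (density of orbits) by itself does not yield the contradiction you want here.

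The statement that $|Z(\psi^n(T))|\to 0$ for a regular $T$ is true, but it is a nontrivial theorem in its own right (in Rauzy--Veech language, it amounts to showing that every letter occurs infinitely often as winner and as loser under iteration of $\psi$), and you cannot simply invoke idoc in one line. A more economical route, closer to Rauzy's original argument, is to avoid proving $t_n\to\ell$ altogether and instead argue termination for the specific admissible $t=T^k(\gamma_a)$: track the finite orbit segment $\{T^h(\gamma_a):0\le h\le k\}$ (or the analogous segment for $k\le 0$) through successive inductions and show that its cardinality inside $[t,t_{n-1}[$ strictly decreases, forcing $t=t_n$ after finitely many steps.
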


The map $T\mapsto\psi(T)$ is called the \emph{right Rauzy induction}.

\begin{example}
\label{ex:induced}
Consider again the transformation $T$ of Example~\ref{ex:rotation}.
Since $Z(T)=[0,1-\alpha[$, the transformation $\psi(T)$ is the one represented in Figure~\ref{fig:psirotation}.
\end{example}

\subsection{Left Rauzy Induction}
\label{subsec:left}

The symmetrical notion of \emph{left Rauzy induction} is defined similarly.
Define
\begin{displaymath}
Y(T)=[\min\{\gamma_2,\delta_{\pi(2)}\},r[.
\end{displaymath}
We denote by $\varphi(T)$ the transformation induced by $T$ on $Y(T)$. The map $T~\mapsto~\varphi~(T)$ is called the \emph{left Rauzy induction}. 

The notion of left admissible interval is symmetrical to that of right admissible.
For  $\ell<t<r$, the semi-interval $[t,r[$  is \emph{left admissible}  for $T$ if there is a $k\in\bbbz$ such that $t=T^k(\gamma_a)$ for some $a\in A$
and
\begin{enumerate}
\item[(i)] if $k>0$, then $T^h(\gamma_a)<t$ for all $h$ such that $0<h<k$,
\item[(ii)] if $k\le 0$, then $T^h(\gamma_a)<t$ for all $h$ such that $k<h\le 0$.
\end{enumerate}
We also say that $t$ itself is left admissible.

The symmetrical statement of Theorem~\ref{theo:Rauzy2} also holds
for left admissible intervals.
Note that, similar to the right admissibility, we have $[\gamma_a, r[$ and $[\delta_a, r[$ left admissible for every $a \in A$.

\begin{example}
\label{ex:phirotation}
Let $T$ be the transformation of Example~\ref{ex:rotation}.
One has $Y(T) = [\alpha, 1[$. The transformation $\varphi(T) = T_{(12),(1-2\alpha, 1-\alpha)}$ is represented in Figure~\ref{fig:phirotation}.

\begin{figure}[hbt]
\centering
\gasset{Nadjust=wh,AHnb=0}
\begin{picture}(61.8,10)(38.2,0)

\node[fillcolor=red](alphaH)(38.2,10){}
\node[Nframe=n](alpha)(38.2,13){$\alpha$}
\node[fillcolor=blue](1-alphaH)(61.8,10){}
\node[Nframe=n](1-alpha)(61.8,13){$1-\alpha$}
\node(1H)(100,10){}
\node[Nframe=n](1)(100,13){$1$}

\node[fillcolor=blue](alphaB)(38.2,0){}
\node[fillcolor=red](2alphaB)(76.4,0){}
\node[Nframe=n](2alpha)(76.4,3){$2\alpha$}
\node(1)(100,0){}

\drawedge[linecolor=red](alphaH,1-alphaH){}
\drawedge[linecolor=blue](1-alphaH,1H){}
\drawedge[linecolor=blue](alphaB,2alphaB){}
\drawedge[linecolor=red](2alphaB,1B){}

\end{picture}
\caption{Transformation $T_{(12),(1-2\alpha, 1-\alpha)}$ induced by $T$ on $[\alpha, 1[$.}
\label{fig:phirotation}
\end{figure}
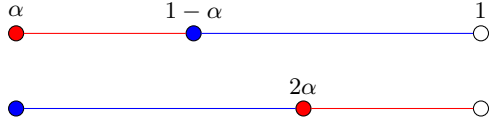
\end{example}

Note that for a $2$-interval exchange transformation $T$, one has $\psi(T) = \varphi(T)$, whereas in general the two transformations are different.

\subsection{Two-sided Induction}
\label{subsec:two}

In this section, we generalize the left and right Rauzy inductions to a two-sided induction (see~\cite{Twosided} for a detailed presentation).

Let $T=T_{\pi, \lambda}$ be an $s$-interval exchange transformation on $[\ell,r[$ relative to $(I_a)_{a\in A}$.
For a semi-interval $I=[u,v[\subset[\ell,r[$, we define the following functions
on $[\ell,r[$
\begin{displaymath}
\rho^+_{I,T}(z)=\min\{n>0\mid T^n(z)\in\ ]u,v[\},\quad 
\rho^-_{I,T}(z)=\min\{n\ge 0\mid T^{-n}(z)\in\ ]u,v[\}.
\end{displaymath}

We define the set of \emph{neighbours} of $z$ with respect to $I$ and $T$ as
\begin{displaymath}
N_{I,T}(z)=\{ T^k(z) \mid -\rho_{I,T}^-(z)\le k < \rho_{I,T}^+(z) \}.
\end{displaymath}

The set of \emph{division points} of $I$ with respect to $T$ is the finite set
\begin{displaymath}
\Div(I,T)=\bigcup_{i=1}^s N_{I,T}(\gamma_i).
\end{displaymath}

For $\ell\le u<v\le r$, we say that the semi-interval $I=[u,v[$ is \emph{admissible} for $T$ if $u,v\in\Div(I,T)\cup \{r\}$.

Note that a semi-interval $[\ell,v[$ is right admissible if and only if  it is admissible and that a semi-interval $[u,r[$ is left admissible if and only if it is admissible.
Note also that $[\ell,r[$ is admissible.

Note also that for a regular interval exchange transformation relative to a partition $(I_a)_{a\in A}$, each of the semi-intervals
$I_a$ (or $J_a$) is admissible although only the first one is right admissible (and the last one is left admissible).

Recall that $\Sep(T)$ denotes the set of separation points of $T$, i.e. the points $\gamma_1=0,\gamma_2,\ldots,\gamma_s$ (which are the left boundaries of the semi-intervals $I_{a_1}, I_{a_2}, \ldots, I_{a_s}$).
The following generalization of Theorem~\ref{theo:Rauzy1} is proved in~\cite{Twosided}.

\begin{theorem}
\label{theo:biRauzy1}
Let $T$ be a regular $s$-interval exchange transformation on $[\ell,r[$.
For any admissible semi-interval $I=[u,v[$, the transformation $S$ induced by $T$ on $I$ is a regular $s$-interval exchange transformation with separation points $\Sep(S)=\Div(I,T)\cap I$.
\end{theorem}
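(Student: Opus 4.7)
The plan is to generalize the argument for Rauzy's Theorem~\ref{theo:Rauzy1} (the right-admissible case) to the two-sided setting by working directly with the first-return structure of $T$ on $I$. By Theorem~\ref{theo:Keane}, regularity implies minimality of $T$, so the return time $n(z)=\min\{n\ge 1\mid T^n(z)\in I\}$ is finite for every $z\in I$ and $S(z)=T^{n(z)}(z)$ is a well-defined map of $I$ to itself.

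I would introduce the \emph{itinerary} $\iota(z)=(a_0,\ldots,a_{n(z)-1})$ given by $T^j(z)\in I_{a_j}$ for $0\le j<n(z)$, and partition $I$ into the level sets $C_\iota=\{z\in I:\iota(z)=\iota\}$. Each $C_\iota$ is a subinterval, being the preimage under iterates of $T$ of a fixed product of $I_a$'s intersected with $I$, and $S$ acts on it as the single translation $\sum_j\alpha_{a_j}$. The left endpoints of these intervals form the candidate set for $\Sep(S)$.

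The next step is the set-theoretic identity $\Sep(S)=\Div(I,T)\cap I$. A boundary of some $C_\iota$ at a point $z_0\in I$ occurs exactly when, as $z$ crosses $z_0$, some forward iterate $T^j(z)$ with $0\le j\le n(z)$ either meets the left endpoint $u$ of $I$ (changing the return time) or meets a discontinuity $\gamma_a$ of $T$ (changing the itinerary). In both situations one can write $z_0=T^{-k}(\gamma_a)$, and the condition that the segment $T^{-k}(\gamma_a),\ldots,\gamma_a$ does not re-enter $]u,v[$ is exactly $z_0\in N_{I,T}(\gamma_a)$. The converse inclusion is the same argument read backwards.

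It remains to count $|\Sep(S)|=s$ and to verify regularity. Each of the $s$ separation points $\gamma_i$ of $T$ contributes its first backward return $T^{-\rho^-_{I,T}(\gamma_i)}(\gamma_i)\in\,]u,v[$, but the orbit of $\gamma_1=\ell$ and that of the unique $\gamma_{a^*}$ with $T(\gamma_{a^*})=\gamma_1$ are linked, making $N_{I,T}(\gamma_1)=N_{I,T}(\gamma_{a^*})$ a single orbit-chunk and trimming the interior count to $s-1$; admissibility then contributes $u$ itself, yielding exactly $s$. Regularity of $S$ follows because each $\gamma'\in\Sep(S)\setminus\{u\}$ lies in a distinct $T$-orbit among those of $\gamma_2,\ldots,\gamma_s$, and its $S$-orbit embeds in that $T$-orbit, which is infinite and disjoint from the others by regularity of $T$. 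The main obstacle is the careful bookkeeping: confirming that every point of $\Div(I,T)\cap I$ is a genuine separation of $S$ (and not an accidental coincidence) and that the count yields exactly $s$ — which hinges on the orbit-overlap $T(\gamma_{a^*})=\gamma_1$ built into the definition of regularity.
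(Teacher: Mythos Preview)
The paper does not prove Theorem~\ref{theo:biRauzy1}; it merely cites~\cite{Twosided} for the proof. So there is no in-paper argument to compare against. Your proposal supplies a direct proof along the classical lines (Rohlin towers / itineraries) that one would expect to find in~\cite{Twosided}: partition $I$ by return-itinerary, identify the resulting left endpoints with $\Div(I,T)\cap I$, and count.

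Your sketch is sound in outline. Two places deserve a bit more care when you write it out. First, the equality $N_{I,T}(\gamma_1)=N_{I,T}(\gamma_{a^*})$ you invoke is literally true only when $u>\ell$ (so that $\gamma_1=\ell\notin\,]u,v[$); in that case $\rho^-_{I,T}(\gamma_1)=\rho^-_{I,T}(\gamma_{a^*})+1$ and $\rho^+_{I,T}(\gamma_1)=\rho^+_{I,T}(\gamma_{a^*})-1$, so the two neighbour sets coincide as sets and contribute the same single point of $]u,v[$. When $u=\ell$ the interval is right admissible and you can fall back on Theorem~\ref{theo:Rauzy1} directly, or redo the count noting that $u=\gamma_1$ is then both the ``extra'' boundary point and the (missing) interior contribution of $\gamma_1$. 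Second, for the identity $\Sep(S)=\Div(I,T)\cap I$ in the forward direction, you should also rule out that two itinerary cells $C_\iota$, $C_{\iota'}$ abut at some $z_0$ with the \emph{same} translation value $\sum_j\alpha_{a_j}$; regularity of $T$ (disjoint orbits of the $\gamma_i$) is exactly what prevents such coincidences, but it is worth saying so explicitly, since otherwise $z_0$ would lie in $\Div(I,T)\cap I$ without being a genuine separation point of $S$.
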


We have already noted that for any $s$-interval exchange transformation  on $[\ell,r[$ and any semi-interval $I$ of $[\ell,r[$, the transformation $S$
induced by $T$ on $I$ is an interval exchange transformation on at most $s+2$-intervals. Actually, it follows from the proof of Lemma 2, page 128
in~\cite{CornfeldFominSinai1982} that, if $T$ is  regular and $S$ is an $s$-interval exchange transformation with separation points $\Sep(S)=\Div(I,T)\cap I$, then $I$ is admissible.
Thus the converse of Theorem~\ref{theo:biRauzy1} is also true.

The following generalization of Theorem~\ref{theo:Rauzy2} is proved in~\cite{Twosided}.

\begin{theorem}
\label{theo:biRauzy2}
Let $T$ be a regular $s$-interval exchange transformation on $[\ell,r[$.
A semi-interval $I$ is admissible for $T$ if and only if there exists a sequence $\chi\in\{\varphi,\psi\}^*$ such that $I$ is the domain of $\chi(T)$. In this case, the transformation induced by $T$ on $I$ is $\chi(T)$.
\end{theorem}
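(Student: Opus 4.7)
The plan is to prove the two implications separately, using a common transitivity-of-admissibility lemma. For the direction $(\Leftarrow)$, I would induct on $|\chi|$. The base case $\chi = \varepsilon$ gives $D(T) = [\ell,r[$, admissible as remarked in the text. For the inductive step, write $\chi = \omega \cdot \chi'$ with $\omega \in \{\varphi,\psi\}$ applied last. By the induction hypothesis, $J := D(\chi'(T))$ is admissible for $T$ and $S := \chi'(T)$ is the transformation induced by $T$ on $J$; moreover $D(\chi(T))$ equals $Y(S)$ or $Z(S)$, both admissible for $S$ by definition of $Y$ and $Z$. The conclusion then follows from the transitivity lemma described below.

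For the direction $(\Rightarrow)$, I would induct on a suitable complexity measure of $I$ relative to $T$, for instance the total number of $T$-iterations that the separation points $\gamma_a$ need to first enter $I$, finite by regularity and minimality. The base case $I = [\ell,r[$ is immediate with $\chi = \varepsilon$. For the step, assume $I = [u,v[ \neq [\ell,r[$ admissible. The crucial claim is that $I \subseteq Z(T)$ or $I \subseteq Y(T)$. Granting this, say $I \subseteq Z(T)$: by the transitivity lemma, $I$ is admissible for $\psi(T)$ with strictly smaller complexity (applying $\psi$ collapses several $T$-iterations into each $\psi(T)$-iteration). The induction hypothesis then yields $\chi'$ with $I = D(\chi'(\psi(T)))$, so $\chi := \chi' \cdot \psi$ works; the case $I \subseteq Y(T)$ is symmetric.

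The transitivity lemma I need reads: if $J$ is admissible for $T$, $S$ is the transformation induced by $T$ on $J$, and $I \subseteq J$, then $I$ is admissible for $T$ if and only if $I$ is admissible for $S$, and in either case the transformation induced by $T$ on $I$ equals the transformation induced by $S$ on $I$. The equality of induced transformations follows from the tower structure of $S$; the admissibility equivalence uses Theorem~\ref{theo:biRauzy1} to identify $\Sep(S)$ with $\Div(J,T) \cap J$, then traces the neighbors of separation points relative to $I$ between $T$ and $S$. The main obstacle is establishing the dichotomy $I \subseteq Z(T)$ or $I \subseteq Y(T)$. Arguing by contradiction, suppose $v > v_0 := \max\{\gamma_s, \delta_{\pi(s)}\}$ and $u < u_0 := \min\{\gamma_2, \delta_{\pi(2)}\}$. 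Since $u$ and $v$ are division points of $I$, we have $u = T^k(\gamma_a)$ and $v = T^m(\gamma_b)$ with neighbor conditions forcing the orbits of $\gamma_a$ and $\gamma_b$ to avoid $]u,v[$. In the extremal region $]v_0, r[$ the structure of $T$ is rigid—no separation points or their images lie strictly above $v_0$—forcing the orbit of $\gamma_b$ near $v$ to have a very specific local form; combined with the symmetric analysis at $u$, one derives either a collision incompatible with regularity of $T$ or a violation of the neighbor condition. I expect this case analysis, parallel to the argument underlying Theorem~\ref{theo:Rauzy2} in \cite{Rauzy1979}, to be the most delicate step.
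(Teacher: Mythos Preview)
The paper does not actually prove Theorem~\ref{theo:biRauzy2}; it merely states the result and cites~\cite{Twosided} for the proof. There is therefore no in-paper argument to compare your proposal against.

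For what it is worth, your outline matches the strategy carried out in~\cite{Twosided}: the $(\Leftarrow)$ direction by induction on $|\chi|$ through a transitivity-of-admissibility lemma, and the $(\Rightarrow)$ direction via the dichotomy that any proper admissible $I$ satisfies $I\subseteq Z(T)$ or $I\subseteq Y(T)$, followed by recursion on the induced system. Two places in your sketch would need tightening. First, your complexity measure (``total number of $T$-iterations for the separation points to first enter $I$'') is not obviously strictly decreasing under a single application of $\psi$ or $\varphi$, since $\psi(T)$ agrees with $T$ on most of $Z(T)$; the cleaner termination argument used in~\cite{Twosided} observes that the endpoints of the successive domains $D(\chi(T))$ all lie in the finite set $\Div(I,T)\cup\{r\}$ and that the domains strictly shrink, so the process must reach $I$ after finitely many steps. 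Second, your contradiction argument for the dichotomy is only a heuristic; in~\cite{Twosided} this is the substantive part of the proof and proceeds by a case analysis on which of $\gamma_s,\delta_{\pi(s)}$ is larger (and symmetrically for $\gamma_2,\delta_{\pi(2)}$), using that on the extremal piece $T$ is a single translation to pin down the possible orbit segments landing at $u$ and $v$.
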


\subsection{Equivalence Graph}
\label{subsec:graph}

For an interval exchange transformation $T$ we consider the directed graph $G(T)$, called the \emph{equivalence graph} of $T$, defined as follows. The vertices are the equivalence classes of transformations obtained starting from $T$ and applying all possible $\chi \in \left\{ \psi, \varphi \right\}^*$.
There is an edge starting from a vertex $[T]$ to a vertex $[S]$ if and only if $S = \theta(T)$ for two transformations $T \in [T]$ and $S \in [S]$ and a $\theta \in \{ \psi, \varphi \}$.

\begin{example}
\label{ex:equivalence}
Let $T = T_{(12), \left(1-\alpha, \alpha\right)}$ be the regular $2$-interval exchange transformation of Example~\ref{ex:rotation}.
Applying the right and the left Rauzy induction on $T$ we obtain
$\psi(T) = \varphi(T) = T_{(12), \left(1-2\alpha,\alpha\right)}$ (see Examples~\ref{ex:psirotation}). These two transforations are equivalent (see Example~\ref{ex:equivalentiet}). Therefore the equivalence graph of $T$, represented in Figure~\ref{fig:equivalence}, contains only one vertex.
Note that the ratio of the two lengths of the semi-intervals exchanged by $T$ is $\frac{1 - \alpha}{\alpha} = \frac{1 + \sqrt{5}}{2} = \phi = 1 + \frac{1}{1 + \frac{1}{1+ \cdots}} = 1 + \frac{1}{\phi}$, i.e. the golden ratio.

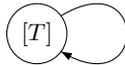
\begin{figure}[hbt]
\centering
\gasset{Nframe=y}
\begin{picture}(10,5)(0,-5)

\node(T)(0,0){$[T]$}

\drawloop[curvedepth=5,loopangle=0](T){}

\end{picture}
\caption{Equivalence graph of the transformation $T = T_{(12),\left(1-\alpha,\alpha\right)}$.}
\label{fig:equivalence}
\end{figure}
\end{example}
Note that, in general, the equivalence graph can be infinite. We will give in the next section a sufficient condition for the equivalence graph to be finite.

\section{Interval Exchange Transformations Over a Quadratic Field}
\label{sec:quadratic}
An interval exchange transformation is said to be defined over a set $Q \subset \bbbr$ if the lengths of all exchanged intervals belong to $Q$.

The following is proved in~\cite{BoshernitzanCarroll1997}.
Let $T$ be a minimal interval exchange transformation on semi-intervals defined over a quadratic number field.
Let $(T_n)_{n\ge 0}$ be a sequence of interval exchange transformation such that $T_0=T$ and $T_{n+1}$ is the transformation induced by $T_n$ on one of its exchanged semi-intervals $I_n$. Then, up to rescaling all intervals $I_n$ to the same length, the sequence $(T_n)$ contains finitely many distinct transformations.

In this section we generalize this results and prove that, under the above hypothesis on the lengths of the semi-intervals and up to rescaling, there are finitely many transformations obtained by the two-sided Rauzy induction.

\begin{theorem}
\label{theo:quadratic}
Let $T$ be a regular interval exchange transformation defined over a quadratic field.
The family of all induced transformation of $T$ over an admissible semi-interval contains finitely many distinct transformations up to equivalence.
\end{theorem}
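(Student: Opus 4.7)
The plan is to combine the parametrisation of admissible semi-intervals by Theorem~\ref{theo:biRauzy2} with a Galois-conjugate control argument in the spirit of Boshernitzan--Carroll. By Theorem~\ref{theo:biRauzy2}, the transformations induced by $T$ on admissible semi-intervals are exactly those of the form $\chi(T)$ for $\chi\in\{\psi,\varphi\}^*$, so I have to bound the number of equivalence classes $[\chi(T)]$. An equivalence class is determined by a permutation in $\mathcal{S}_s$ together with the length vector up to positive scaling and reversal, and there are at most $s!$ permutations, so the question reduces to showing that only finitely many projective length vectors occur.

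By Theorem~\ref{theo:biRauzy1}, the separation points of $\chi(T)$ lie in $\Div(I,T)$, and every element of $\Div(I,T)$ is an orbit point $T^k(\gamma_a)$, which is a $\bbbz$-linear combination of the original lengths $\lambda_1,\ldots,\lambda_s$. Hence the lengths of every $\chi(T)$ belong to the fixed finitely generated $\bbbz$-module $M=\bbbz\langle\lambda_1,\ldots,\lambda_s\rangle$ inside the quadratic field $K=\bbbq(\sqrt{d})$. Embedding $K$ into $\bbbr^{2}$ via the identity and Galois conjugation makes $M$ a discrete subgroup, so any subset of $M$ bounded in both coordinates is finite.

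The decisive step is to bound, uniformly in $\chi$, the Galois conjugates of the length vector of $\chi(T)$ after a renormalisation in $K$. Boshernitzan--Carroll obtain exactly such a bound for right Rauzy induction on exchanged semi-intervals by viewing each induction step as an elementary non-negative integer matrix action on the length vector and tracking a quadratic invariant built from the Galois conjugation. I would transfer this bound to the two-sided setting by noting that each application of $\psi$ or $\varphi$ is again an elementary matrix action, and by using the reversal symmetry of the equivalence relation, which turns $\varphi$ on $T$ into $\psi$ on the reversed transformation; both letters therefore preserve the Galois-conjugate estimate. Once the uniform bound is in place, discreteness of $M$ forces the renormalised length vectors into a finite set, which together with the finiteness of permutations concludes the argument.

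The main obstacle is to show that the Galois-conjugate estimate really survives arbitrary alternating compositions $\chi\in\{\psi,\varphi\}^*$, because the Boshernitzan--Carroll control is essentially one-sided: individual $\psi$ and $\varphi$ steps are harmless, but the reversal trick used to handle $\varphi$ forces a change of frame that must not let the conjugate lengths drift across repeated applications. Establishing this uniform boundedness of conjugates under the two-sided induction is the technical heart of the proof, and is where the combinatorics of products of the elementary matrices associated to $\psi$ and $\varphi$ has to be examined.
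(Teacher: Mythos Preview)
Your proposal follows a different route from the paper and leaves open precisely the step you yourself flag as the main obstacle. The paper does not track Galois conjugates of length vectors under products of elementary induction matrices at all; instead it bounds the \emph{reduced complexity} $\Pi(J)=|J|\,\Psi(J)$ of every admissible semi-interval $J$ directly from the geometric definition of admissibility, and then applies Boshernitzan's black-box Theorem~\ref{theo:bosh} (which already contains the Galois/discreteness argument). Concretely: if $J=[t,w[$ is admissible with, say, $t=T^{M}(\gamma)$ where $M$ is the larger of the exponents realising the two endpoints, then on one hand $\Psi(J)\le Mu$ by Proposition~\ref{pro:psiT-psi}, while on the other hand the admissibility condition forces the orbit segment $\{T^{j}(\gamma):1\le j\le M\}$ to miss $J$, so Corollary~\ref{cor:oM} gives $|J|=O(1/M)$. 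Hence $\Pi(J)=O(1)$, and Proposition~\ref{pro:piS} supplies the lower bound. This is Theorem~\ref{theo:pi}, and together with Theorems~\ref{theo:biRauzy2} and~\ref{theo:bosh} it finishes the proof in two lines.

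What this buys is exactly the circumvention of your obstacle: no analysis of alternating $\psi/\varphi$ matrix products is needed, because the argument never leaves the original transformation $T$; admissibility is used as a dynamical constraint on orbit returns rather than as a parametrisation by words in $\{\psi,\varphi\}^{*}$. Your route may be completable, but as written it is a programme rather than a proof: the uniform Galois-conjugate bound under arbitrary $\chi\in\{\psi,\varphi\}^{*}$ is asserted, not established, and the reversal trick you invoke changes both the length vector and the ambient frame simultaneously, so it is not clear that the one-sided Boshernitzan--Carroll estimate survives repeated alternation without drift. The paper's approach sidesteps this entirely by working with $\Pi$ and return times.
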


Note that the previous theorem implies the result of \cite{BoshernitzanCarroll1997}. Indeed every semi-interval exchanged by a transformation is admissible, while for $n>2$ there are admissible semi-intervals that we can not obtain using the induction only on the exchanged ones.

The proof of the Theorem~\ref{theo:quadratic} is based on the fact that for each minimal interval exchange transformation defined over a quadratic field, a certain measure of the arithmetic complexity of the admissible semi-intervals is bounded.

\subsection{Complexities}
\label{subsec:complexities}

Let $T$ be an interval exchange transformation on a semi-interval $[\ell, r[$ defined over a quadratic field $ \bbbq[\sqrt{d}]$, where $d$ is a square free integer $\geq 2$.
Without loss of generality, one may assume, by replacing $T$ by an equivalent interval exchange transformation if necessary, that $T$ is defined over the ring
$ \bbbz[\sqrt{d}] = \{ m + n\sqrt{d} \,\, | \,\, m,n \in \bbbz \} $
and that all $\gamma_i$ and $\alpha_i$ lie in $\bbbz[\sqrt{d}]$ (replacing $[\ell, r[$ if necessary by its equivalent translate with $\gamma_0 = \ell \in \bbbz[\sqrt{d}]$).

For $z = m + n \sqrt{d}$ let define
$\Psi(z) = \max ( |m|, |n| )$.

The following Proposition is proved in~\cite[Proposition 2.2]{BoshernitzanCarroll1997}.

\begin{proposition}
\label{pro:o1}
For every $z \in \bbbz[\sqrt{d}] \setminus \{ 0 \}$, one has $|z| \, \Psi(z) > \frac{1}{2 \sqrt{d}}$.
\end{proposition}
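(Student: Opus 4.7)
The plan is to exploit the arithmetic conjugate in $\bbbz[\sqrt{d}]$. Writing $z = m + n\sqrt{d}$, I would introduce its Galois conjugate $\bar{z} = m - n\sqrt{d}$ and consider the norm $N(z) = z\bar{z} = m^2 - d\,n^2$, which lies in $\bbbz$. The first key observation is that $N(z) \neq 0$ whenever $z \neq 0$: since $d \geq 2$ is squarefree, $\sqrt{d}$ is irrational, so $m^2 = d\,n^2$ forces $n = 0$ and hence $m = 0$. Consequently $|N(z)| \geq 1$, i.e.\ $|z|\,|\bar{z}| \geq 1$.

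The second step is a crude bound on the conjugate in terms of $\Psi(z)$. By the triangle inequality,
\begin{displaymath}
|\bar{z}| \;=\; |m - n\sqrt{d}| \;\leq\; |m| + |n|\sqrt{d} \;\leq\; (1 + \sqrt{d})\,\Psi(z).
\end{displaymath}
Because $d \geq 2$, we have $\sqrt{d} > 1$, hence $1 + \sqrt{d} < 2\sqrt{d}$, and the bound above is in fact strict: $|\bar{z}| < 2\sqrt{d}\,\Psi(z)$.

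Combining the two steps gives
\begin{displaymath}
1 \;\leq\; |z|\,|\bar{z}| \;<\; |z|\cdot 2\sqrt{d}\,\Psi(z),
\end{displaymath}
which rearranges to the desired inequality $|z|\,\Psi(z) > \tfrac{1}{2\sqrt{d}}$. There is no real obstacle here: the whole argument rests on the single algebraic idea of pairing $z$ with its conjugate so that the product is a nonzero integer, and the rest is elementary estimation. The only care needed is to use $d \geq 2$ (not just $d \geq 1$) to obtain a \emph{strict} inequality rather than a mere $\geq$.
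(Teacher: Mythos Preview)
Your proof is correct. The paper itself does not prove this proposition but merely cites \cite[Proposition 2.2]{BoshernitzanCarroll1997}; the conjugate-and-norm argument you give is the standard one and is essentially what appears in that reference.
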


Let $\mathcal{A}([\ell,r[)$ be the algebra of subsets $S \subset [\ell,r[$ which are finite unions $S = \bigcup_j I_j$ of semi-intervals defined over $\bbbz[\sqrt{d}]$, i.e. $I_j = [\ell_j, r_j[$ for some $\ell_j, r_j \in \bbbz[\sqrt{d}]$.
Note that the algebra $\mathcal{A}([\ell,r[)$ is closed under taking finite unions, intersections and passing to complements in $[\ell, r[$.

We define the \emph{complexity} $\Psi(S)$ and the \emph{reduced complexity} $\Pi(S)$ of a subset $S \in \mathcal{A}([\ell,r[)$ as
$$ \Psi(S) = \max \{ \Psi(z) \,\, | \,\, z \in \partial(S) \} \quad \mbox{ and } \quad \Pi(S) = |S| \, \Psi(S), $$
where $\partial(S)$ is the boundary of $S$ and $|S|$ stands for the Lebesgue measure of $S$.

A key tool to prove Theorem~\ref{theo:quadratic} is the following Theorem proved in~\cite[Theorem 3.1]{BoshernitzanCarroll1997}.

\begin{theorem}[Boshernitzan]
\label{theo:bosh}
Let $T$ be a minimal interval exchange transformation on an interval $[\ell,r[$ defined over a quadratic number field. Assume that $(Y_n)_{\geq 1}$ is a sequence of semi-intervals of $[\ell,r[$ such that the set $\{ \Pi(Y_n) \,\, | \,\, n \geq 1 \}$ is bounded.
Then the sequence $S_n$ of interval exchange transformations obtained by inducing $T$ on $Y_n$ contains finitely many distinct equivalence classes of interval exchange transformations.
\end{theorem}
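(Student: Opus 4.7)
The plan is to derive Theorem~\ref{theo:quadratic} as a corollary of Boshernitzan's Theorem (Theorem~\ref{theo:bosh}) by verifying its hypothesis on the family of admissible semi-intervals of $T$.

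First I would translate the problem via the two-sided Rauzy induction. By Theorem~\ref{theo:biRauzy2}, the admissible semi-intervals of $T$ are precisely the domains $D(\chi(T))$ with $\chi \in \{\psi, \varphi\}^*$, and the corresponding induced transformations are the $\chi(T)$. Since $T$ is regular, Keane's Theorem (Theorem~\ref{theo:Keane}) yields the minimality of $T$, so Theorem~\ref{theo:bosh} is applicable. Thus the proof reduces to exhibiting a constant $C = C(T) > 0$ with $\Pi(I) \leq C$ for every admissible $I$.

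To control $\Pi(I) = |I| \cdot \Psi(I)$, fix an admissible $I = [u, v[$. Admissibility forces each non-trivial boundary point to lie in $\Div(I, T)$, hence to be of the form $T^k(\gamma_a)$ with $a \in A$ and $|k|$ strictly less than the first $T$-return time of $\gamma_a$ to the open interval $]u, v[$. Since each iterate of $T$ shifts by one of the translation values $\alpha_b \in \bbbz[\sqrt{d}]$, subadditivity of $\Psi$ gives
\begin{equation*}
\Psi(T^k(\gamma_a)) \leq \Psi(\gamma_a) + |k| \cdot M, \qquad M = \max_{b \in A} \Psi(\alpha_b).
\end{equation*}
Hence the sought-for bound on $\Pi(I)$ reduces to producing a uniform upper bound of the form $|k| \leq K(T)/|I|$ on the first return time of every separation point to any admissible open interval of length $|I|$.

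The main obstacle is establishing this return-time estimate in the absence of any \emph{a priori} control on how deep in the Rauzy induction $I$ sits. My plan is to combine Theorem~\ref{theo:biRauzy1} --- which ensures that the induced transformation $S$ on $I$ is an $s$-IET with $\Sep(S) = \Div(I,T) \cap I \subset \bbbz[\sqrt{d}]$ --- with the Kac-type identity $\sum_{b \in A} \tau_b \, |I^S_b| = r - \ell$ relating the return times $\tau_b$ to the lengths of the exchanged sub-intervals $I^S_b$, and with Proposition~\ref{pro:o1}. The latter, applied to the sub-interval lengths $|I^S_b| = v_b - u_b \in \bbbz[\sqrt{d}]$, provides the algebraic rigidity needed to bound $\min_b |I^S_b|$ below in terms of $|I|$, and hence $\max_b \tau_b$ above by $K(T)/|I|$. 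Feeding this bound into the subadditive estimate gives $\Pi(I) \leq C$, and Theorem~\ref{theo:bosh} immediately concludes the argument.
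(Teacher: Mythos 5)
Your proposal does not prove the statement in question: it proves (a sketch of) Theorem~\ref{theo:quadratic} \emph{by invoking} Theorem~\ref{theo:bosh}, whereas the statement you were asked to establish \emph{is} Theorem~\ref{theo:bosh} itself. You write ``so Theorem~\ref{theo:bosh} is applicable'' and ``Theorem~\ref{theo:bosh} immediately concludes the argument'', which makes the argument circular as a proof of that theorem. (For what it is worth, the part you do argue --- bounding $\Pi(I)$ uniformly over admissible $I$ via the return-time estimate and then feeding this into Theorem~\ref{theo:bosh} --- closely mirrors the paper's own Theorem~\ref{theo:pi} and its proof of Theorem~\ref{theo:quadratic}; but that is a different statement. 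The paper itself does not prove Theorem~\ref{theo:bosh} either: it imports it verbatim as Theorem~3.1 of Boshernitzan--Carroll.)

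What is actually required, and entirely absent from your proposal, is the passage from ``$\Pi(Y_n)$ bounded'' to ``finitely many equivalence classes among the induced maps $S_n$''. The induced $S_n$ is determined up to equivalence by its permutation (finitely many possibilities) together with the normalized length vector $\bigl(\lambda^{(n)}_i/|Y_n|\bigr)_i$, so the heart of the proof is showing that these normalized vectors take only finitely many values. This is where the quadratic field enters: the lengths $\lambda^{(n)}_i$ and $|Y_n|$ lie in $\bbbz[\sqrt{d}]$ with $\Psi$-complexity $O\bigl(1/|Y_n|\bigr)$ by the boundedness of $\Pi(Y_n)$ (together with the complexity control of Proposition~\ref{pro:psiT-psi} on the endpoints of the exchanged sub-intervals). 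One then uses the Galois conjugate $\overline{z}$ and the norm $N(z)=z\overline{z}\in\bbbz$: both $\bigl|\lambda^{(n)}_i\overline{|Y_n|}\bigr|$ and $\bigl|\overline{\lambda^{(n)}_i}\,|Y_n|\bigr|$ are bounded (a small length times a conjugate of bounded modulus, via $|\overline{z}|\le 2\sqrt{d}\,\Psi(z)$), so the algebraic integers $\lambda^{(n)}_i\overline{|Y_n|}$ and $N(|Y_n|)$ range over finite sets, hence so do the ratios $\lambda^{(n)}_i/|Y_n| = \lambda^{(n)}_i\overline{|Y_n|}/N(|Y_n|)$. Your appeal to Proposition~\ref{pro:o1} and a Kac-type identity gestures in the right direction for bounding return times, but it never produces this finiteness-of-normalized-lengths argument, which is the actual content of Theorem~\ref{theo:bosh}.
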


The following Proposition is proved in~\cite[Proposition 2.1]{BoshernitzanCarroll1997}. It shows that the complexity of a subset $S$ and of its image $T(S)$ differs at most by a constant that depends only on $T$.

\begin{proposition}
\label{pro:psiT-psi}
There exists a constant $u = u(T)$ such that for every $S \in \mathcal{A}([\ell,r[)$ and $z \in [\ell,r[$ one has
$$| \Psi(T(S)) - \Psi(S) | \leq u \quad \mbox{ and } \quad \Psi(T(z) - z) \leq u.$$
Moreover, one has $\Psi(\gamma) \leq u$ and $\Psi(T(\gamma)) \leq u$ for every separation point $\gamma$.
\end{proposition}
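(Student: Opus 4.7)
The plan is to exploit the fact that $T=T_{\pi,\lambda}$ is piecewise a translation by a vector $\alpha_a\in\bbbz[\sqrt d]$, and that every datum defining $T$ (the $\gamma_i$, $\delta_i$, $\alpha_a$) lies in $\bbbz[\sqrt d]$. The natural candidate for the constant is
\[
u \;:=\; \max\!\Bigl(\max_{a\in A}\Psi(\alpha_a),\ \max_{i}\Psi(\gamma_i),\ \max_{i}\Psi(\delta_i)\Bigr),
\]
possibly enlarged by a harmless additive constant as described below. The key elementary fact, used throughout, is that $\Psi$ is subadditive: for $z=m+n\sqrt d$ and $w=m'+n'\sqrt d$ in $\bbbz[\sqrt d]$ one has $\Psi(z+w)\leq\Psi(z)+\Psi(w)$, and $\Psi(-z)=\Psi(z)$. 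The ``moreover'' clauses and the pointwise bound are then immediate: $\Psi(\gamma_i)\leq u$ by the definition of $u$; $T(\gamma_i)=\delta_{\pi(i)}$ yields $\Psi(T(\gamma_i))\leq u$; and for any $z\in I_a$ we have $T(z)-z=\alpha_a$, whence $\Psi(T(z)-z)=\Psi(\alpha_a)\leq u$.

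The substance of the proposition is the inequality $|\Psi(T(S))-\Psi(S)|\leq u$. Write $S=\bigsqcup_j[\ell_j,r_j[$ with $\ell_j,r_j\in\bbbz[\sqrt d]$, and refine this decomposition by intersecting each $[\ell_j,r_j[$ with the partition $(I_a)_{a\in A}$, so that $T$ acts on every refined piece $[\ell_j,r_j[\cap I_a$ as a pure translation by $\alpha_a$. Every boundary point $y\in\partial T(S)$ is then either (a) the image $x+\alpha_a$ of an endpoint $x\in\partial S$ lying in $I_a$, in which case $\Psi(y)\leq\Psi(x)+\Psi(\alpha_a)\leq\Psi(S)+u$; or (b) the image of a separation point of $T$ falling inside some $[\ell_j,r_j[$, i.e.\ a point of the form $\gamma_a+\alpha_{a'}\in\bbbz[\sqrt d]$, whose $\Psi$-value is bounded by $\Psi(\gamma_a)+\Psi(\alpha_{a'})\leq 2u$. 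Merging of two translated pieces that happen to abut after reassembly can only \emph{delete} boundary points, never create new ones outside this list. Enlarging $u$ if necessary so as to absorb the (bounded, $T$-dependent) constant from case~(b), we obtain $\Psi(T(S))\leq\Psi(S)+u$.

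The reverse inequality $\Psi(S)\leq\Psi(T(S))+u$ follows by running the same argument on $T^{-1}$, which is again an $s$-interval exchange transformation over $\bbbz[\sqrt d]$ (closure under inversion is noted in Section~\ref{sec:iet}), and then replacing $u$ by the larger of $u(T)$ and $u(T^{-1})$. The main obstacle I anticipate is the bookkeeping in step~(b): one must verify carefully that $\partial T(S)$ really is exhausted by the two types above, paying attention to endpoint effects at $\ell$ and $r$ and to the fact that the refined partition introduces only finitely many ``new'' cut points, whose images under the piecewise translation all lie in a fixed finite subset of $\bbbz[\sqrt d]$ depending only on $T$. Once this enumeration is in hand, the subadditivity of $\Psi$ delivers the bound mechanically.
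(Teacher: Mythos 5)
Your proposal is correct, but note that the paper itself does not prove this proposition: it is quoted verbatim from Proposition~2.1 of Boshernitzan--Carroll and used as a black box, so there is no in-paper argument to compare against. Your reconstruction is the natural direct one and it is sound: after refining $S$ by the partition $(I_a)_{a\in A}$, every point of $\partial(T(S))$ is either a translate $x+\alpha_a$ of a point $x\in\partial(S)$ or a translate $\gamma+\alpha_{a'}$ of a separation point (or of the endpoints $\ell$, $r$), and your form $\gamma_a+\alpha_{a'}$ with possibly $a'\neq a$ is exactly right, since the right endpoint $\gamma_{a+1}$ of $I_a$ gets translated by $\alpha_a$, not by $\alpha_{a+1}$; subadditivity of $\Psi$ and the inclusion $\partial\bigl(\bigcup_a X_a\bigr)\subseteq\bigcup_a\partial(X_a)$ then give $\Psi(T(S))\leq\Psi(S)+u$, and the reverse bound follows from $T^{-1}$ being an interval exchange transformation over $\bbbz[\sqrt d]$ as you say. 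The only cosmetic point is that the case~(b) contribution is bounded by an absolute constant depending on $T$ alone, hence is dominated by $\Psi(S)+u$ after at most doubling $u$, exactly as you anticipate; no genuine gap remains.
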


Clearly, by Proposition~\ref{pro:psiT-psi}, one also has $| \Psi(T^{-1}(S)) - \Psi(S) | \leq u$ for every $S \in \mathcal{A}([\ell,r[)$ and $\Psi(T^{-1}(z) - z) \leq u$ for every $z \in [\ell,r[$.

Combining Proposition~\ref{pro:o1} and Proposition~\ref{pro:psiT-psi} we easily obtain the following (see also~\cite[Corollary 2.3]{BoshernitzanCarroll1997}).
\begin{corollary}
\label{cor:cn}
There exists a constant $c > 0$ such that for every $n \in \bbbz$ and $z \in [\ell, r[$ one has either $T^n(z) = z$ or $|T^n(z) - z| > \frac{c}{n}$.
\end{corollary}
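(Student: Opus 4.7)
The plan is to combine the two preceding propositions through a telescoping argument on the orbit of $z$ under $T$. The idea is to control $\Psi(T^n(z)-z)$ linearly in $n$, and then invoke the quadratic arithmetic inequality of Proposition~\ref{pro:o1}.

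First, I would handle the case $n \geq 1$. Writing the telescoping identity
\[
T^n(z) - z \;=\; \sum_{k=0}^{n-1} \bigl(T^{k+1}(z) - T^k(z)\bigr),
\]
I would apply the second inequality of Proposition~\ref{pro:psiT-psi} to each summand: for every $k$, one has $\Psi\bigl(T(T^k(z)) - T^k(z)\bigr) \leq u$. Since $\Psi$ is defined as the max of the absolute values of the two coordinates of an element of $\bbbz[\sqrt{d}]$, it is subadditive, so
\[
\Psi\bigl(T^n(z) - z\bigr) \;\leq\; \sum_{k=0}^{n-1} \Psi\bigl(T^{k+1}(z) - T^k(z)\bigr) \;\leq\; n\,u.
\]
Here I also need to note that $T^n(z) - z$ actually lies in $\bbbz[\sqrt{d}]$: each step of the orbit adds a translation value $\alpha_a \in \bbbz[\sqrt{d}]$, so the total displacement is an integer combination of these, independently of whether $z$ itself lies in $\bbbz[\sqrt{d}]$.

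Now, if $T^n(z) \neq z$, then $T^n(z) - z$ is a nonzero element of $\bbbz[\sqrt{d}]$, so Proposition~\ref{pro:o1} yields
\[
\bigl|T^n(z) - z\bigr| \cdot \Psi\bigl(T^n(z)-z\bigr) \;>\; \frac{1}{2\sqrt{d}}.
\]
Combining with $\Psi(T^n(z)-z) \leq n u$ gives $|T^n(z)-z| > \frac{1}{2\sqrt{d}\,u\,n}$. Setting $c = \frac{1}{2\sqrt{d}\,u}$ furnishes the desired bound for positive $n$.

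For the case $n < 0$, I would use the symmetric version of Proposition~\ref{pro:psiT-psi} that the author has already flagged just before the corollary, namely $\Psi(T^{-1}(z) - z) \leq u$, and telescope along the backward orbit in exactly the same way to get $\Psi(T^n(z) - z) \leq |n|\,u$, after which the same application of Proposition~\ref{pro:o1} finishes the argument. The case $n=0$ is trivial since $T^0(z)=z$. There is no real obstacle here: the only point worth being careful about is the justification that $T^n(z)-z \in \bbbz[\sqrt{d}]$ even though $z$ itself need not be, and the subadditivity of the max-norm $\Psi$—both of which I would mention explicitly.
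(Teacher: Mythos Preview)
Your argument is correct and is exactly the approach the paper intends: it does not spell out a proof but simply says the corollary follows by combining Proposition~\ref{pro:o1} with Proposition~\ref{pro:psiT-psi}, and your telescoping plus subadditivity of $\Psi$ is precisely how that combination is carried out. Your explicit remarks that $T^n(z)-z$ lies in $\bbbz[\sqrt{d}]$ as a sum of translation values, and that $\Psi$ is subadditive, are the right points to justify and would in fact make your write-up more complete than the paper's one-line reference.
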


The following Proposition, proved in~\cite[Proposition 2.4]{BoshernitzanCarroll1997}, determines a lower bound on the reduced complexity of a nonempty subset $S \in \mathcal{A}([\ell, r|)$.

\begin{proposition}
\label{pro:piS}
Let $S \in \mathcal{A}([\ell,r[)$ be a subset composed of $n$ disjoint semi-intervals. Then
$ \Pi(S) > \frac{n}{4 \sqrt{d}}$.
\end{proposition}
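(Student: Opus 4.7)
The plan is to extract a uniform lower bound on the length of each of the $n$ disjoint semi-intervals comprising $S$, then sum and rearrange. Write $S = \bigcup_{j=1}^n [\ell_j, r_j[$ as a disjoint union where the decomposition is taken to be into (maximal) connected components, so that each endpoint $\ell_j, r_j$ is a genuine point of $\partial(S)$. In particular $\Psi(\ell_j) \leq \Psi(S)$ and $\Psi(r_j) \leq \Psi(S)$ for all $j$.

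The first ingredient is the subadditivity of $\Psi$: for $z_1 = m_1 + n_1\sqrt{d}$ and $z_2 = m_2 + n_2\sqrt{d}$ in $\bbbz[\sqrt{d}]$ one has
$$\Psi(z_1 + z_2) = \max(|m_1+m_2|,|n_1+n_2|) \leq \Psi(z_1)+\Psi(z_2).$$
Applied to $r_j - \ell_j$, this gives $\Psi(r_j - \ell_j) \leq 2\,\Psi(S)$. The second ingredient is Proposition~\ref{pro:o1} applied to the nonzero element $r_j - \ell_j \in \bbbz[\sqrt{d}]$, which yields $|r_j - \ell_j|\,\Psi(r_j - \ell_j) > \frac{1}{2\sqrt{d}}$. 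Combining the two bounds gives
$$|r_j - \ell_j| \;>\; \frac{1}{2\sqrt{d}\,\Psi(r_j - \ell_j)} \;\geq\; \frac{1}{4\sqrt{d}\,\Psi(S)}.$$

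Summing over $j = 1, \ldots, n$ yields $|S| = \sum_j (r_j - \ell_j) > \frac{n}{4\sqrt{d}\,\Psi(S)}$, which upon multiplication by $\Psi(S)$ gives the desired inequality $\Pi(S) > \frac{n}{4\sqrt{d}}$. There is no real obstacle in this argument; it is essentially a packaging of Proposition~\ref{pro:o1} via subadditivity of $\Psi$. The only point requiring a moment of care is the choice of the decomposition of $S$: one must ensure that no two of the $n$ semi-intervals abut (sharing an endpoint $r_j = \ell_k$), since in that case the common point would lie in $\mathrm{int}(S)$ rather than in $\partial(S)$ and the bound $\Psi(\ell_k) \leq \Psi(S)$ could fail. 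Taking the decomposition into connected components removes this issue and the proof goes through cleanly.
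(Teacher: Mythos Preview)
Your argument is correct. The paper itself does not supply a proof of this proposition, citing instead \cite[Proposition 2.4]{BoshernitzanCarroll1997}; your derivation---subadditivity of $\Psi$ to bound $\Psi(r_j-\ell_j)\le 2\Psi(S)$, then Proposition~\ref{pro:o1} applied to each nonzero difference $r_j-\ell_j$, and summation over the $n$ components---is exactly the natural route and coincides with the argument in that reference.
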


\subsection{Return Times}
\label{subsec:return}

Let $T$ be an interval exchange transformation.
For a subset $S \in \mathcal{A}([\ell, r[)$ we define the maximal positive and maximal negative return times for $T$ on $S$ by the functions
$$
\rho^+(S) = \min \left\{ n \geq 1 \, | \, T^n(S) \subset \bigcup_{i = 0}^{n-1} T^i(S) \right\},
$$
and
$$
\rho^-(S) = \min \left\{ m \geq 1 \, | \, T^m(S) \subset \bigcup_{i = 0}^{m-1} T^{-i}(S) \right\}.
$$

We also define the minimal positive and the minimal negative return times as
$$
\sigma^+(S) = \min \left\{ n \geq 1 \, | \, T^n(S) \cap S \neq \emptyset \right\},
$$
and
$$
\sigma^-(S) = \min \left\{ m \geq 1 \, | \, T^{-m}(S) \cap S \neq \emptyset \right\},
$$

Note that, when $S$ is a semi-interval, we have $\rho^+(S) = \max\limits_{z \in S} \rho_{S,T}^+(z)$ and $\sigma^+(S) = \min\limits_{z \in S} \rho_{S,T}^+(z)$.
Symmetrically $\rho^-(S) = \max\limits_{z \in S} \rho_{S,T}^-(z) + 1$ and $\sigma^-(S) = \min\limits_{z \in S} \rho_{S,T}^-(z) + 1$.

If $T$ is minimal, it is clear that
$ [\ell, r[ \, = \bigcup_{i=0}^{\rho^+(S)-1} T^i(S) = \bigcup_{i=0}^{\rho^-(S)-1} T^{-i}(S)$.

Let $\zeta, \eta$ be two functions. We write $\zeta = O(\eta)$ if there exists a constant $C$ such that $|\zeta| \leq C |\eta|$. We write $\zeta = \Theta(\eta)$ if one has both $\zeta = O(\eta)$ and $\eta = O(\zeta)$.
Note that $\Theta$ is an equivalence relation.

Boshernitzan and Carroll give in~\cite{BoshernitzanCarroll1997} two upper bounds for $\rho^+(S)$ and $\sigma^+(S)$ for a subset $S$ (Theorems 2.5 and 2.6 respectively) and a more precise estimation when the subset is a semi-interval (Theorem 2.8). Some slight modifications of the proofs can be made so that the results hold also for $\rho^-$ and $\sigma^-$. We summarize these estimates in the following theorem.

\begin{theorem}
\label{theo:sim}
For every $S \in \mathcal{A}([\ell,r[)$ one has
$ \rho^+(S), \rho^-(S) = O(\Psi(S))$
and
$\sigma^+(S), \sigma^-(S) = O\left(\frac{1}{| S |}\right)$.
Moreover, if $T$ is minimal and $J$ is a semi-interval, then
$\rho^+(J) = \Theta\left(\rho^-(J)\right) = \Theta\left(\sigma^+(J)\right) = \Theta\left(\sigma^-(J)\right) = \Theta\left(\frac{1}{| J |}\right)$.
\end{theorem}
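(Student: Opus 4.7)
My plan is to establish the four bounds of Theorem~\ref{theo:sim} in turn, the workhorses being Propositions~\ref{pro:o1}, \ref{pro:psiT-psi}, \ref{pro:piS} together with the bijectivity and measure-preservation of $T$.

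\textbf{Steps for the general bounds.} For $\sigma^\pm(S)=O(1/|S|)$: if $T^i(S)\cap T^j(S)\neq\emptyset$ for some $0\le i<j<\sigma^+(S)$, applying $T^{-i}$ gives $T^{j-i}(S)\cap S\neq\emptyset$ with $1\le j-i<\sigma^+(S)$, contradicting minimality; hence $S,T(S),\ldots,T^{\sigma^+(S)-1}(S)$ are pairwise disjoint subsets of $[\ell,r[$ of measure $|S|$ each, so $\sigma^+(S)\cdot|S|\le r-\ell$. The symmetric argument with $T^{-1}$ yields the bound for $\sigma^-$. For $\rho^\pm(S)=O(\Psi(S))$, set $N:=\rho^+(S)$ and form the ``newly visited'' sets $E_n:=T^n(S)\setminus\bigcup_{i<n}T^i(S)$; by definition of $\rho^+$ each $E_n$ for $0\le n<N$ is a nonempty element of $\mathcal{A}([\ell,r[)$, the $E_n$ are pairwise disjoint, and iterating Proposition~\ref{pro:psiT-psi} gives $\Psi(E_n)\le\Psi(S)+nu+O(1)$. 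Proposition~\ref{pro:piS} applied to each $E_n$ (with at least one semi-interval component) yields $|E_n|>1/(4\sqrt{d}\,\Psi(E_n))$. Summing over $n<N$, using $\sum_{n<N}|E_n|\le r-\ell$, and comparing the resulting harmonic-type sum with an integral, one obtains $\frac{1}{4\sqrt{d}\,u}\log\bigl(1+\frac{Nu}{\Psi(S)+O(1)}\bigr)\le r-\ell$, which solves to $N=O(\Psi(S))$ with a constant depending only on $T$ and $[\ell,r[$. The bound for $\rho^-$ is obtained identically with $T^{-1}$ in place of $T$, using the inverse form of Proposition~\ref{pro:psiT-psi} noted in the paper.

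\textbf{Refinement for a semi-interval $J$.} When $T$ is minimal and $J=[a,b[$, the induced map on $J$ is an IET on at most $s+2$ sub-semi-intervals $J_1,\ldots,J_k$ with constant first-return times $n_1,\ldots,n_k$. The $T$-invariance of Lebesgue measure and the tower partition $[\ell,r[=\bigsqcup_{i,\,0\le j<n_i}T^j(J_i)$ give Kac's identity $\sum_i|J_i|n_i=r-\ell$, together with $\sum_i|J_i|=|J|$. Immediately $\sigma^+(J)=\min_i n_i\le(r-\ell)/|J|\le\max_i n_i=\rho^+(J)$, pinning the average return time at $(r-\ell)/|J|$. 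To promote this to the full chain $\Theta(1/|J|)$, I would use the bound $\rho^+(J)=O(\Psi(J))$ from the previous step together with the observation that each endpoint of every $J_i$ is a $T$-iterate of an endpoint of $J$ or of a separation point of $T$, and therefore lies in $\bbbz[\sqrt{d}]$ with $\Psi$-complexity $O(\Psi(J)+n_i u)$; Proposition~\ref{pro:o1} then gives $|J_i|\ge 1/\bigl(O(\Psi(J)+n_i u)\bigr)$, and substituting into Kac's identity produces a self-consistency inequality on $\max_i n_i$ that, combined with the Proposition~\ref{pro:piS} lower bound $\Psi(J)\ge 1/(4\sqrt{d}\,|J|)$, pins both $\max_i n_i$ and $\min_i n_i$ to $\Theta(1/|J|)$. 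The minus-sided quantities $\rho^-(J),\sigma^-(J)$ are handled by the symmetric argument using $T^{-1}$.

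\textbf{Main obstacle.} The first two bounds are essentially direct estimates once the three propositions are available; the delicate point is the refinement for semi-intervals. Kac's identity alone permits a single piece $J_i$ to be arbitrarily small while carrying a compensatingly huge return time, so that $\rho^+(J)$ could in principle exceed $(r-\ell)/|J|$ arbitrarily. Ruling out such degenerations is precisely where the quadratic-field hypothesis is essential, via Proposition~\ref{pro:o1} converting the $\Psi$-complexity cap on the endpoints of the $J_i$ into the Lebesgue lower bound above; closing the resulting self-consistency inequality to sharpen the bound from $O(\Psi(J))$ to $O(1/|J|)$ is the main technical hurdle of the proof.
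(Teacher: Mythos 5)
First, note that the paper does not actually prove this statement: it is presented as a summary of Theorems 2.5, 2.6 and 2.8 of \cite{BoshernitzanCarroll1997}, with the remark that the proofs adapt to $\rho^-$ and $\sigma^-$. So the comparison is really with Boshernitzan--Carroll's arguments. Your first two steps reproduce those arguments correctly: the disjointness of $S, T(S),\ldots,T^{\sigma^+(S)-1}(S)$ gives $\sigma^\pm(S)\le (r-\ell)/|S|$ (your phrase ``contradicting minimality'' should read ``contradicting the minimality of $\sigma^+(S)$'', but the mathematics is right), and the harmonic-sum estimate on the newly visited sets $E_n$, via Proposition~\ref{pro:psiT-psi} and Proposition~\ref{pro:piS}, does yield $\rho^\pm(S)=O(\Psi(S))$ with a constant exponential in $r-\ell$. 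This is essentially the proof of Theorems 2.5 and 2.6 in \cite{BoshernitzanCarroll1997}.

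The refinement for a semi-interval $J$, however, has a genuine gap, and it is exactly where you locate the ``main technical hurdle''. Two problems. (1) Your self-consistency inequality does not close: substituting $|J_i|\ge 1/\bigl(C(\Psi(J)+n_iu)\bigr)$ into Kac's identity gives $\sum_i n_i/\bigl(C(\Psi(J)+n_iu)\bigr)\le r-\ell$, but each summand is bounded above by $1/(Cu)$ independently of $n_i$, so for large $n_i$ the inequality imposes no constraint; at best it recovers $n_i=O(\Psi(J))$, which you already had. Combining $\rho^+(J)=O(\Psi(J))$ with the \emph{lower} bound $\Psi(J)\ge 1/(4\sqrt{d}|J|)$ from Proposition~\ref{pro:piS} yields nothing -- you would need the upper bound $\Psi(J)=O(1/|J|)$, which is false for a general semi-interval with endpoints in $\bbbz[\sqrt{d}]$ and, for admissible $J$, is Theorem~\ref{theo:pi}, whose proof rests on Corollary~\ref{cor:oM} and hence on the very statement being proved (circularity). (2) You never establish the lower bound $\sigma^\pm(J)=\Omega(1/|J|)$; Kac's identity only gives $\sigma^+(J)\le (r-\ell)/|J|\le\rho^+(J)$. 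The missing ingredient here is Corollary~\ref{cor:cn}: if $z$ and $T^{\sigma}(z)$ with $\sigma=\sigma^+(J)$ both lie in the interval $J$, then $c/\sigma<|T^{\sigma}(z)-z|<|J|$, whence $\sigma>c/|J|$. The remaining inequality $\rho^\pm(J)=O(1/|J|)$ is the genuinely hard part of Theorem 2.8 of \cite{BoshernitzanCarroll1997} and requires an argument you have not supplied; as it stands, the chain of $\Theta$-equivalences in the second half of the statement is not proved.
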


An immediate corollary of Theorem~\ref{theo:sim} is the following (see also Corollary 2.9 of~\cite{BoshernitzanCarroll1997}).
\begin{corollary}
\label{cor:oM}
Let $T$ be minimal and assume that
$$
\{ T^i(z) \, | \, -m+1 \leq i \leq n-1 \} \cap J = \emptyset
$$
for some point $z \in [\ell,r[$, some semi-interval $J \subset [\ell,r[$ and some integers $m, n \geq 1$.
Then $|J| = O\left( \frac{1}{\max \{ m,n \}} \right)$
\end{corollary}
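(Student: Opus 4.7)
The plan is to convert the orbit-avoidance hypothesis into lower bounds on the maximal return times $\rho^+(J)$ and $\rho^-(J)$ of $T$ on the semi-interval $J$, and then use Theorem~\ref{theo:sim} to turn these lower bounds into an upper bound on $|J|$.

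First I would set up the two covering identities stated in Subsection~\ref{subsec:return}: since $T$ is minimal, one has
\begin{displaymath}
[\ell,r[ \;=\; \bigcup_{i=0}^{\rho^+(J)-1} T^i(J) \;=\; \bigcup_{i=0}^{\rho^-(J)-1} T^{-i}(J).
\end{displaymath}
The first identity applied to the point $z$ produces some $i\in\{0,1,\dots,\rho^+(J)-1\}$ with $T^{-i}(z)\in J$, so the backward piece $\{z,T^{-1}(z),\dots,T^{-(\rho^+(J)-1)}(z)\}$ of the orbit of $z$ meets $J$. Symmetrically, the second identity forces the forward piece $\{z,T(z),\dots,T^{\rho^-(J)-1}(z)\}$ to meet $J$.

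Next I would compare these pieces to the orbit segment in the hypothesis. The assumption $\{T^i(z)\mid -m+1\le i\le n-1\}\cap J=\emptyset$ says in particular that $T^{-i}(z)\notin J$ for $0\le i\le m-1$ and $T^i(z)\notin J$ for $0\le i\le n-1$. By the previous paragraph, this is only possible if $\rho^+(J)-1\ge m$ and $\rho^-(J)-1\ge n$, i.e.\ $\rho^+(J)>m$ and $\rho^-(J)>n$.

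Finally, since $T$ is minimal and $J$ is a semi-interval, Theorem~\ref{theo:sim} yields constants $C_1,C_2>0$ with $\rho^+(J)\le C_1/|J|$ and $\rho^-(J)\le C_2/|J|$. Combining with the bounds above gives $m<C_1/|J|$ and $n<C_2/|J|$, hence $\max\{m,n\}\le\max\{C_1,C_2\}/|J|$, which is exactly $|J|=O(1/\max\{m,n\})$. There is no serious obstacle here: the only point that requires a moment of care is keeping the sign conventions straight, namely that the \emph{backward} avoidance of $z$ bounds $\rho^+(J)$ (which is the maximal \emph{positive} return time and governs the forward cover of $[\ell,r[$ by iterates of $J$), and dually for $\rho^-(J)$.
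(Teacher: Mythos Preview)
Your argument is correct and matches the paper's proof: both convert the avoidance hypothesis into the lower bounds $\rho^+(J)\ge m$ and $\rho^-(J)\ge n$ via the covering identities from Subsection~\ref{subsec:return}, and then invoke Theorem~\ref{theo:sim} to conclude $|J|=O(1/m)$ and $|J|=O(1/n)$. The only cosmetic difference is that the paper phrases the last step through $|J|=\Theta(1/\rho^\pm(J))$ rather than through explicit constants $C_1,C_2$.
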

\begin{proof}
By the hypothesis, $z \notin \bigcup_{i=0}^{n-1} T^{-i}(J)$ we have $\rho^-(J) \geq n$.
Then, using Theorem~\ref{theo:sim}, we obtain
$ |J| = \Theta\left(\frac{1}{\rho^-(J)}\right) = O\left( \frac{1}{n} \right)$.
Symmetrically, since $\rho^+(J) \geq m$, one has $ |J| = O\left( \frac{1}{m} \right)$.
Then
$ |J| = O\left( \min \left\{ \frac{1}{m}, \frac{1}{n} \right\} \right) = O\left( \frac{1}{\max \{ m,n \}} \right)$.
\qed
\end{proof}

\subsection{Reduced Complexity of Admissible Semi-Intervals}
\label{subsec:main}

In order to demonstrate the main theorem (Theorem~\ref{theo:quadratic}), we prove some preliminary results concerning the reduced complexity of admissible semi-intervals.

Let $T$ be an $s$-interval exchange transformation. Recall that we denote by $\Sep(T) = \{ \gamma_i \, | \, 0 \leq i \leq s-1 \}$ the set of separation points.
For every $n \geq 0$ define
$\mathcal{D}_n(T) = \bigcup_{i=0}^{n-1} T^{-i} \big(\Sep(T)\big)$
with the convention $\mathcal{D}_0 = \emptyset$.

Since $\Sep(T^{-1}) = T\big(\Sep(T)\big)$, one has $\mathcal{D}_n(T^{-1}) = T^{n-1}\big(\mathcal{D}_n(T)\big)$.

Given two integers $m,n \geq 1$, we can define $\mathcal{D}_{m,n} = \mathcal{D}_m(T) \cup \mathcal{D}_n(T^{-1})$.
An easy calculation shows that
$$
\mathcal{D}_{m,n}(T) = \bigcup_{i=-m+1}^{n} T^i\big(\Sep(T)\big).
$$

Observe also that $\mathcal{D}_{m,n}(T) = T^{n}\big( \mathcal{D}_{m+n}(T)\big) = T^{-m+1}\big( \mathcal{D}_{m+n}(T)\big)$.

Denote by $\mathcal{V}_{m,n}(T)$ the family of semi-intervals whose endpoints are in $\mathcal{D}_{m,n}(T)$.
Put $\mathcal{V}(T) = \bigcup_{m,n \geq 0} \mathcal{V}_{m,n}(T)$.

Every admissible semi-interval belong to $\mathcal{V}(T)$, while the converse is not true.

\begin{theorem}
\label{theo:pi}
$\Pi(J) = \Theta(1)$ for every semi-interval $J$ admissible for $T$.
\end{theorem}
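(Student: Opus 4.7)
The plan is to prove the two bounds $\Pi(J) \gg 1$ and $\Pi(J) \ll 1$ separately. The lower bound is immediate from Proposition~\ref{pro:piS} applied to the single semi-interval $J$: this gives $\Pi(J) > 1/(4\sqrt{d})$, so $\Pi(J) = \Omega(1)$.

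For the upper bound, write $J = [u, v[$. Admissibility forces $u, v \in \Div(J, T) \cup \{r\}$. Unfolding the definition of $\Div$, if $u \neq \ell, r$ then there is a separation point $\gamma_{i_u}$ and an integer $k_u$ with $-\rho^-_{J,T}(\gamma_{i_u}) \leq k_u < \rho^+_{J,T}(\gamma_{i_u})$ such that $u = T^{k_u}(\gamma_{i_u})$, and the orbit arc
\begin{displaymath}
\mathcal{O}_u := \{T^j(\gamma_{i_u}) : -\rho^-_{J,T}(\gamma_{i_u}) \leq j \leq \rho^+_{J,T}(\gamma_{i_u}) - 1\}
\end{displaymath}
avoids the open interval $]u,v[$. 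The analogous data $(\gamma_{i_v}, k_v, \mathcal{O}_v)$ is associated to $v$. The special endpoints $\ell$ and $r$ lie in $\bbbz[\sqrt{d}]$, so they contribute only an additive $O(1)$ to $\Psi$ and can be absorbed into constants.

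Complexity side. Iterating Proposition~\ref{pro:psiT-psi} and using that $\Psi$ is subadditive (being the $\ell^\infty$-max of the coordinate moduli in $\bbbz[\sqrt{d}]$), one gets $\Psi(T^k(z) - z) \leq |k| \cdot C$ for every $z$ and $k \in \bbbz$, where $C$ is the constant of Proposition~\ref{pro:psiT-psi}. Combined with $\Psi(\gamma) \leq C$ on separation points, this yields $\Psi(u) \leq (|k_u|+1) C$ and $\Psi(v) \leq (|k_v|+1) C$, so $\Psi(J) = O(\max\{1, |k_u|, |k_v|\})$.

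Length side. I want to apply Corollary~\ref{cor:oM} to the orbit of $\gamma_{i_u}$, but the delicate point is that $\mathcal{O}_u$ only avoids $]u,v[$ and in fact \emph{contains} the endpoint $u = T^{k_u}(\gamma_{i_u}) \in J$. This is exactly where regularity enters: since $T$ is regular the orbit of every separation point is infinite, so $u$ is realised in $\mathcal{O}_u$ only at index $k_u$. When $k_u > 0$ the truncated arc $\{T^j(\gamma_{i_u}) : -\rho^-_{J,T}(\gamma_{i_u}) \leq j \leq k_u - 1\}$ is now disjoint from $J$, so Corollary~\ref{cor:oM} with $z = \gamma_{i_u}$, $m = \rho^-_{J,T}(\gamma_{i_u})+1$, $n = k_u$ gives $|J| = O(1/k_u)$. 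The case $k_u < 0$ is symmetric using the arc on the other side of $u$, and the same argument on $\gamma_{i_v}$ yields $|J| = O(1/|k_v|)$ whenever $|k_v| \geq 1$.

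Combining the two estimates case by case finishes the proof: if $k_u = k_v = 0$ then both $u$ and $v$ are separation points, so $\Psi(J) = O(1)$ and $|J| \leq r - \ell = O(1)$; otherwise, whichever of $|k_u|, |k_v|$ is larger simultaneously controls $\Psi(J)$ from above and $1/|J|$ from below, giving $\Pi(J) = |J|\,\Psi(J) = O(1)$. The main obstacle throughout is precisely the ``open vs.\ half-open'' mismatch between the admissibility information and the hypothesis of Corollary~\ref{cor:oM}; regularity is exactly what is needed to excise the single problematic boundary point from the orbit arc.
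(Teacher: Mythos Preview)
Your argument follows the same route as the paper's: lower bound from Proposition~\ref{pro:piS}, upper bound by pairing $\Psi(J)=O(|k|)$ (iterated Proposition~\ref{pro:psiT-psi}) with $|J|=O(1/|k|)$ (Corollary~\ref{cor:oM} applied to an orbit arc of a separation point that admissibility forces to miss $J$), where $|k|$ is the larger of the two iteration indices realising the endpoints. One small slip to repair: by the very definition of $\rho^-_{J,T}$ the point $T^{-\rho^-_{J,T}(\gamma_{i_u})}(\gamma_{i_u})$ already lies in $]u,v[$, so your arc $\mathcal{O}_u$ and its truncation should start at index $-\rho^-+1$ rather than $-\rho^-$; this is harmless, since taking $m=1$ in Corollary~\ref{cor:oM} still gives $|J|=O(1/|k_u|)$.
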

\begin{proof}
Let $m, n$ be the two minimal integers such that $J = [t,w[ \, \in \mathcal{V}_{m,n}(T)$.
Then $t,w \in \{T^m(\gamma_i) \, | \, 1 \leq i \leq s \} \cup \{T^{-n}(\gamma_i) \, | \, 1 \leq i \leq s \}$.
Suppose, for instance, $t = T^M(\gamma)$, with $M = \max \{m,n\}$ and $\gamma$ a separation point.
The other cases -- i.e. $t = T^{-M}(\gamma)$, $w=T^{M}(\gamma)$ or $w=T^{-M}(\gamma)$ -- are proved similary.

The only semi-interval in $\mathcal{V}_{0,0}(T)$ is $[\ell,r[$ and clearly in this case the theorem is verified.
Suppose then that $J \in \mathcal{V}_{m,n}(T)$ for some non-negative integers $m,n$ with $m+n > 0$.

We have
$\Psi(J) = \max \{ \Psi(t), \Psi(w) \} \leq Mu$ where $u$ is the constant introduced in Proposition~\ref{pro:psiT-psi}.

Moreover, by the definition of admissibility one has
$ \{ T^j(\gamma) \, | \, 1 \leq j \leq M \} \cap J = \emptyset$.
Thus, by Corollary~\ref{cor:oM} we have $|J| = O(\frac{1}{M})$.
Then
$ \Pi(J) = |J| \ \Psi(J) = O(1)$.
By Proposition~\ref{pro:piS} we have $\Pi(J) > \frac{1}{4 \sqrt{d}}$. This concludes the proof.
\qed
\end{proof}

Denote by $\mathcal{U}_{m,n}(T)$ the family of semi-intervals partitioned by $\mathcal{D}_{m,n}(T)$.

Clearly $\mathcal{V}_{m,n}(T)$ contains $\mathcal{U}_{m,n}(T)$. Indeed every semi-interval $J \in \mathcal{V}_{m,n}(T)$ is a finite union of contiguous semi-intervals belonging to $\mathcal{U}_{m,n}(T)$.

Note that $\mathcal{U}_{m,0}(T)$ is the family of semi-intervals exchanged by $T^m$, while $\mathcal{U}_{0,n}(T)$ is the family of semi-intervals exchanged by $T^{-n}$.

Put $\mathcal{U}(T) = \bigcup_{m,n \geq 0} \mathcal{U}_{m,n}(T)$.
Using Theorem~\ref{theo:pi} we easily deduce the following corollary, which is a generalization of Theorem 2.11 in \cite{BoshernitzanCarroll1997}.

\begin{corollary}
\label{cor:pi}
$\Pi(J) = \Theta(1)$ for every semi-interval $J \in \mathcal{U}(T)$.
\end{corollary}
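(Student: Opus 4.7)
The plan is to reduce the corollary to Theorem~\ref{theo:pi} by establishing the inclusion $\mathcal{U}(T) \subset \{\text{semi-intervals admissible for } T\}$. Once this is shown, the conclusion $\Pi(J) = \Theta(1)$ for every $J \in \mathcal{U}(T)$ follows immediately from Theorem~\ref{theo:pi}.

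To verify the inclusion, I would fix $J = [u, v[ \, \in \mathcal{U}_{m,n}(T)$. By the definition of $\mathcal{U}_{m,n}(T)$, both endpoints lie in $\mathcal{D}_{m,n}(T) \cup \{r\}$ and the open interior $(u, v)$ is disjoint from $\mathcal{D}_{m,n}(T)$. Write $u = T^k(\gamma)$ for some $\gamma \in \Sep(T)$ and some $k \in \{-m+1, \ldots, n\}$. The claim is that $u \in N_{J,T}(\gamma)$, i.e.\ that $-\rho^-_{J,T}(\gamma) \le k < \rho^+_{J,T}(\gamma)$. In the case $k > 0$, for every $j \in \{1, \ldots, n\}$ the point $T^j(\gamma)$ belongs to $\mathcal{D}_{m,n}(T)$ and therefore lies outside $(u,v)$ by the $\mathcal{U}_{m,n}$-property; this forces $\rho^+_{J,T}(\gamma) > n \ge k$, while $-\rho^-_{J,T}(\gamma) \le 0 < k$ is automatic. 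The case $k \le 0$ is symmetric, controlling $\rho^-_{J,T}(\gamma)$ via the backward iterates $T^{-j}(\gamma)$ for $0 \le j \le -k-1$, which all lie in $\mathcal{D}_{m,n}(T)$. Thus $u \in \Div(J, T)$; the argument for $v$ is entirely analogous (with $v = r$ handled by the $\cup\{r\}$ clause of admissibility), yielding the admissibility of $J$.

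The only obstacle I anticipate is cosmetic rather than substantive: careful handling of the boundary indices $k \in \{-m+1, 0, n\}$, making sure the $N_{J,T}$-membership condition is satisfied rather than narrowly missed (for instance when $k = 0$, one uses the trivial fact $\rho^+_{J,T}(\gamma) \ge 1$). Once this bookkeeping is done, no further estimates are needed --- the quantitative heart of the statement (the matching upper bound on $|J|\,\Psi(J)$ and the lower bound $1/(4\sqrt{d})$ from Proposition~\ref{pro:piS}) is already contained in Theorem~\ref{theo:pi}.
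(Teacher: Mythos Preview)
Your proposal is correct and matches the paper's approach: the paper states that the corollary is ``easily deduced'' from Theorem~\ref{theo:pi}, and the natural (indeed only) way to make that deduction is precisely your inclusion $\mathcal{U}(T)\subset\{\text{admissible semi-intervals}\}$, which you verify correctly using that the open interior of $J\in\mathcal{U}_{m,n}(T)$ is disjoint from $\mathcal{D}_{m,n}(T)$. The paper leaves this verification implicit, so your write-up simply makes explicit what the paper takes for granted.
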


We are now able to prove Theorem~\ref{theo:quadratic}.

\begin{proofof}{of Theorem~\ref{theo:quadratic}}
By Theorem~\ref{theo:biRauzy2}, every admissible semi-interval can be obtained by a finite sequence $\chi$ of right and left Rauzy inductions.
Thus we can enumerate the family of all admissible semi-intervals.
The conclusion follows easily from Theorem~\ref{theo:bosh} and Theorem~\ref{theo:pi}.
\qed
\end{proofof}

An immediate corollary of Theorem~\ref{theo:quadratic} is the following.

\begin{corollary}
\label{cor:graph}
Let $T$ be a regular interval exchange transformation defined over a quadratic field. Then the extension graph $G(T)$ is finite.
\end{corollary}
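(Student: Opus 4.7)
The plan is very short: the corollary is essentially a direct translation of Theorem~\ref{theo:quadratic} into the language of the graph $G(T)$, via the characterization of admissible semi-intervals given by Theorem~\ref{theo:biRauzy2}. I do not expect any real obstacle; the whole content of the statement is already contained in the main theorem.

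First I would unwind the definitions. By Section~\ref{subsec:graph}, the vertex set of $G(T)$ is $\{[\chi(T)] \mid \chi \in \{\psi,\varphi\}^*\}$. By Theorem~\ref{theo:biRauzy2}, for every $\chi \in \{\psi,\varphi\}^*$ the transformation $\chi(T)$ is precisely the transformation induced by $T$ on the admissible semi-interval $D(\chi(T))$, and conversely every admissible semi-interval $I$ of $T$ arises as $D(\chi(T))$ for some $\chi$. Hence the vertex set of $G(T)$ coincides exactly with the set of equivalence classes of transformations obtained by inducing $T$ on an admissible semi-interval.

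Next I would apply the main theorem: by Theorem~\ref{theo:quadratic}, that set of equivalence classes is finite. Therefore $G(T)$ has finitely many vertices.

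Finally, I would observe that the edge set is automatically finite once the vertex set is. Indeed, from the definition each edge out of a vertex $[S]$ is labelled by an element of $\{\psi,\varphi\}$, so each vertex has at most two outgoing edges; with only finitely many vertices this gives finitely many edges, and $G(T)$ is a finite graph. \qed
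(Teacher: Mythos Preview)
Your proposal is correct and matches the paper's intent: the paper presents Corollary~\ref{cor:graph} as an immediate consequence of Theorem~\ref{theo:quadratic} without a written proof, and your argument via Theorem~\ref{theo:biRauzy2} is exactly the intended unpacking. One tiny remark: the ``at most two outgoing edges'' claim implicitly assumes that equivalent representatives yield equivalent images under $\psi$ and $\varphi$, but since the edge relation is between equivalence classes, a finite vertex set already forces the edge set to be finite regardless.
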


\begin{example}
Let $T = T_{\pi, (\beta, 1-\beta)}$
and $S = T_{\pi, (\gamma, 1-\gamma)}$
be two regular $2$-interval exchange transformations, where $\pi = (12)$ is the permutation defined in Example~\ref{ex:rotation}, $\beta = (2 - \sqrt{2})$
and $\gamma = \frac{3 -\sqrt{3}}{2}$.
The equivalence graphs of $T$ and $S$ are represented in Figure~\ref{fig:sqrt}.
Note that the ratio of the lengths of the semi-intervals exchanged by $T$ is $\frac{\beta}{1 - \beta} = \sqrt{2} = 1 + \frac{1}{2 + \frac{1}{2+ \cdots}} = 1 + \frac{1}{1 + \sqrt{2}}$,
while the the ratio of the lengths of the semi-intervals exchanged by $S$ is $\frac{\gamma}{1 - \gamma} = \sqrt{3} = 1 + \frac{1}{1 + \frac{1}{2+ \frac{1}{1+ \frac{1}{2 + \cdots}}}} = 1 + \frac{1}{1 + \frac{1}{1 + \sqrt{3}}}$.

\begin{figure}[hbt]
\centering
\gasset{Nframe=y}
\begin{picture}(64,14)(0,-8)

\node(T1)(0,0){$[T]$}
\node(T2)(20,0){}

\drawedge[curvedepth=3](T1,T2){}
\drawedge[curvedepth=3](T2,T1){}

\node(S1)(50,0){$[S]$}
\node(S2)(64,7){}
\node(S3)(64,-7){}

\drawedge[curvedepth=4](S1,S2){}
\drawedge[curvedepth=4](S2,S3){}
\drawedge[curvedepth=4](S3,S1){}

\end{picture}
\caption{Equivalence graphs of the transformations $T = T_{\pi, (\beta, 1-\beta)}$ (on the left) and $S~=~T_{\pi, (\gamma, 1-\gamma)}$ (on the right).}
\label{fig:sqrt}
\end{figure}
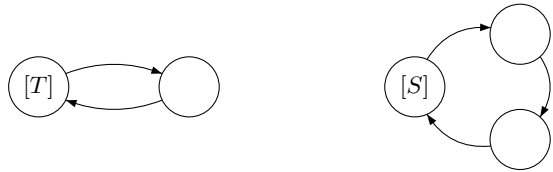
\end{example}


\begin{thebibliography}{}

\bibitem{Arnold1963}
Arnol'd V.I.:
Small denominators and problems of stability of motion in classical and celestial mechanics.
Uspehi Mat. Nauk, 18:91–192 (1963)

\bibitem{Twosided}
Berth\'e V., De Felice C., Dolce F., Perrin D., Reutenauer C., Rindone G.:
Two-sided Rauzy induction.
arXiv:1305.0120 (2013)

\bibitem{BoissyLanneau2009}
Boissy C., Lanneau E.:
Dynamics and geometry of the Rauzy-Veech induction for quadratic differentials.
Ergodic Theory Dynam. Systems, 29(3):767--816, (2009)

\bibitem{Boshernitzan}
Boshernitzan, M.D.:
Rank two interval exchange transformations.
Ergod. Th. and Dynam. Sys., 8, 379--394 (1988)

\bibitem{BoshernitzanCarroll1997}
Boshernitzan, M.D., Carrol, C.R.:
An extension of Lagrange's theorem to interval exchange transformations over quadratic fields.
Journal d'Analyse Mathematique, 72, 21--24 (1997)

\bibitem{CornfeldFominSinai1982}
Cornfeld I.P., Fomin S.V., Sinai Ya.G.:
Ergodig Theory,
volume 245 of Grundlehren der Mathematischen Wissenschaften [Fundamental Principles of Mathematical Sciences] Translated from the Russian by A.B. Sosinskii. (1982)

\bibitem{concrete}
Graham R.L., Knuth D.E., Patashnik O.:
Concrete Mathematics: A Foundation for Computer Science.
Addison-Wesley Longman Publishing Co., Inc. (1994)

\bibitem{Lothaire2}
Lothaire M.:
Algebraic Combinatorics on Words.
Cambridge University Press (2002)

\bibitem{Rauzy1979}
Rauzy G.:
Exchanges d’intervalles et transformations induites.
Acta Arith., 34(4):315--328 (1979)

\bibitem{Keane1975}
Keane M.:
Interval exchange transformations.
Math. Z., 141:25--31 (1975)

\bibitem{Oseledec1966}
Oseledec V.I.:
The spectrum of ergodic automorphism.
Dokl. Akad. Nauk. SSSR, 168:1009--1011 (1966)

\end{thebibliography}
\end{document}